\newcommand{\Reals}{\mathbb{R}}
\newcommand{\Time}{{\sf T}}  
\newcommand{\val}[1]{\mathbf{#1}}
\newcommand{\vals}[1]{\mathit{vals}(#1)}
\newcommand{\ms}[1]{\ifmmode%
\mathord{\mathcode`-="702D\it #1\mathcode`\-="2200}\else%
$\mathord{\mathcode`-="702D\it #1\mathcode`\-="2200}$\fi}
\newcommand{\Tt}{\mathcal{T}}
\newcommand{\M}{\mathcal{M}}
\newcommand{\AutA}{\mathcal{A}}
\newcommand{\AutB}{\mathcal{B}}
\newcommand{\fstate}[1]{#1.\ms{fstate}}
\newcommand{\lstate}[1]{#1.\ms{lstate}}
\newcommand{\arrow}[1]{\mathrel{\stackrel{#1}{\rightarrow}}}
\newcommand{\restr}{\mathop{\lceil}}
\newcommand{\restrrange}{\mathrel{\downarrow}}
\newcommand{\fval}[1]{#1.\ms{fval}}
\newcommand{\lval}[1]{#1.\ms{lval}}
\newcommand{\domain}[1]{{\it dom}(#1)}
\newcommand{\concat}{\mathbin{^{\frown}}}
\newcommand{\hexec}[1]{\ms{execs}(#1)}
\newcommand{\htrace}[1]{{\it trace}(#1)}
\newcommand{\htraces}[1]{{\it traces}(#1)}
\newcommand{\trace}[1]{{\it trace}(#1)}
\newcommand{\ltime}[1]{#1.\ms{ltime}}
\newcommand{\deq}{\mathrel{\stackrel{\scriptscriptstyle\Delta}{=}}}
\newcommand{\hfrag}[1]{\ms{frags}_{#1}}
\newcommand{\type}[1]{\ms{type{(#1)}}}
\newcommand{\dtype}[1]{\ms{dtype{(#1)}}}
\newcommand{\as}[1]{#1_a}
\newcommand{\simrel}{\mathrel{R}}
\newcommand{\ws}[1]{#1_{w}}
\newcommand{\tran}[1]{\stackrel{#1}{\longrightarrow}}
\newcommand{\padact}{\varepsilon}
\newcommand{\padex}{\gamma}
\newtheorem{theorem}{Theorem}
\newtheorem{lemma}{Lemma}
\newtheorem{corollary}{Corollary}
\newtheorem{definition}{Definition}
\newtheorem{proposition}{Proposition}
\newtheorem{example}{Example}
\title{Modelling Implicit Communication in Multi-Agent Systems with Hybrid Input/Output Automata.\thanks{This research has been supported by EC-Project C4C ({\em
Control for Coordination of Distributed Systems}) funded by the European Commission in the 7th EC framework program (Challenge ICT-2007.3.7).}}% and Coordination and Support Action EuRoSurge ({\em European Robotic Surgery}) funded by the European Commission in the 7th EC framework program (FP7-ICT-2011-7).}}
\author{Marta Capiluppi \quad Roberto Segala
\institute{Università di Verona \\
					Dipartimento di Informatica \\
					Verona, Italy}
\email{marta.capiluppi@univr.it, roberto.segala@univr.it}
}
\begin{document}

\maketitle

\begin{abstract}                          % Abstract of not more than 250 words.
We propose an extension of Hybrid I/O Automata (HIOAs) to model agent systems and their implicit communication through perturbation of the environment, like localization of objects or radio signals diffusion and detection. To this end we decided to specialize some variables of the HIOAs whose values are functions both of time and space. We call them {\em world variables}.  Basically they are treated similarly to the other variables of HIOAs, but they have the function of representing the interaction of each automaton with the surrounding environment, hence they can be output, input or internal variables. Since these special variables have the role of simulating implicit communication, their dynamics are specified both in time and space, because they model the perturbations induced by the agent to the environment, and the perturbations of the environment as perceived by the agent. Parallel composition of world variables is slightly different from parallel composition of the other variables, since their signals are summed. The theory is illustrated through a simple example of agents systems.

%The new object, called World Automaton (WA), is built in such a way to preserve as much as possible of the compositionality properties of HIOAs and its underlying theory. From the formal point of view we enrich classical HIOAs with a set of {\em world variables} whose values are functions both of time and space. World variables are treated similarly to local variables of HIOAs, except in parallel composition, where the perturbations produced by world variables are summed. In such a way we obtain a structure able to model both agents and environments, thus inducing a hierarchy in the model and leading to the introduction of a new operator. Indeed this operator, called inplacement, is needed to represent the possibility of an object (WA) of living inside another object/environment (WA). This formalism can be used for modeling and verification of many different scenarios of coordinating agents such as: cooperative search problems (UAVs, submarines, cars); collision avoidance algorithms for autonomous vehicles; signals detection and diffusion (e.g. wireless networks).
\end{abstract}

\section{Introduction}

Many modern complex systems represent agents interacting to achieve a common goal, but reacting in an independent way to external stimuli, following an autonomous decision policy and coordinating using communication. When and where communication fails, the agents need to {\em feel} the environment reacting to its stimuli. 
This is the case, for example, of agents performing a {\em search} mission, such as UAVs \cite{polycarpou} or autonomous underwater vehicles \cite{sousa}, but also of road traffic problems \cite{road,roadcontrol} and autonomous straddle carriers in harbours \cite{adhs12}. These multi-agents problems have been case studies of the European Project CON4COORD (EU FP7 223844) and have motivated the modeling formalism presented in this paper. Indeed what is common to each case study is the presence of a collection of agents that communicate and coordinate to achieve a common goal. Moreover the agents move within an environment that changes dynamically and detect each other's presence not necessarily via direct communication but rather by observations of the environmental changes.
%Following literature, these case studies have been represented by hybrid systems \cite{alur,alur2}, where continuous and discrete dynamics interact. In particular 

We focus on automata-based representations of hybrid systems \cite{alur,alur2}, adding features to a model, in order to keep as much as possible of the underlying theory. Since the motivating case studies need to satisfy some compositionality properties, we choose to start from Hybrid I/O Automata (HIOAs) of \cite{segala}, for which strong results on compositionality exist. 
We add features to represent faithfully situations where a hybrid automaton exists within an environment and derives information about other automata by observing the environment itself, rather than by using any form of direct communication. We will call the exchange of information through observation implicit communication.
%The new automaton is called World Automaton (WA), with the double aim of stressing both the capability of representation of this framework (the world itself) and of representing the reality in the more natural way possible, without adding artificial machineries. For this reason, the basic difference with HIOAs is the introduction of special variables, called {\em world variables}, that take values both in space and time, i.e. their values are indeed functions of the time and space, as occurs in diffusion equations. The introduction of this new kind of variables is motivated again by the case studies: 
Indeed groups of agents usually need to know the environment where they live and move to collect and elaborate data and react in a coordinated way. To do this, they need to exchange stimuli with the surrounding environment, by observation and using sensors. 
Usually the communication between agents acting in a certain area is achieved using artificial machineries, such as supervisors or broadcasting signals. Our aim is to avoid any kind of artificial machinery to model communication between agents and their interaction with the environment, using a more {\em natural} method, based on the human perception, i.e. through observation of the changes in the surroundings of each agent. Moreover direct communication is not always possible, since signals are subject to noise and environmental hostilities, or sometimes it is not the best policy, because sending signals means being intercepted, not considering faults and failures in senders and receivers. Autonomy in making decisions and exchanging information based on the sensing of the reality can be used as a redundant and a faster way of communication. 
To achieve implicit communication, we extend the HIOAs by specializing some variables, called {\em world variables}. They take values both in space and time, i.e. their values are indeed functions of time and space, as occurs in diffusion equations and they represent all the information exchanged between the environment and the agents. World variables represent maps of the changes in the environment as perceived by the agents: at each point in the space and each instant of time their values show the situation that can be sensed by agents standing in that area. To this end, world variables are partitioned into input and output variables: input world variables represent the observations made by the agents sensing the environment, output world variables represent stimuli given by the agents to the environment.
To keep the theory consistent with HIOAs, all the results on semantics are preserved.
%To represent this situation  %In this way the cars send to the outside world a stimulus representing their color, while they receive the information about the color of the surrounding area as input. \\
Moreover we introduce parallel composition using rules similar to the ones for HIOAs, i.e., automata are synchronized on common actions and shared variables, except for output world variables, whose stimuli are summed because their effect on the environment is common.

At the best of our knowledge, there are only a couple of approaches to the presented problem. One has been introduced in \cite{shift} where dynamic networks of hybrid automata are studied. The introduced programming language focuses on dynamical interfaces. Another method has been presented in \cite{cif} where a compositional interchange format (CIF) defined in terms of an interchange automaton is used as a common language to describe objects from the different models for hybrid systems existing in literature. None of these two languages is based on the idea of implicit communication coded by world variables. Our approach is a starting point to solve the problem of dynamical interfaces in a simpler way than the ones proposed. Nevertheless at the current status of our work the presented approach does not solve this problem, even though we started from it. 
We choose to extend HIOAs because of the underlying compositionality theory and because of the input/output distinction of the variables, which we keep in our description. Many other representations of hybrid automata could be used as basis and extended similarly looking at the main theoretical results they have been introduced for. As an example the cited hybrid automata in \cite{alur2} are more focused in reachability issues, but have been studied also for decidability in \cite{decidable}. As stated in \cite{segala} Hybrid Automata (HA) presented in \cite{alur2} are similar to HIOAs in their combined treatment of discrete and continuous activity, but their theory does not address system decomposition issues such as external behavior, implementation relationships and composition. These issues have been addressed in \cite{alur3} by using hybrid reactive modules, but they still differ from the way they are faced by HIOAs, because reactive modules still communicate via shared variables, not via shared actions.  Summarizing, the choice of the HIOA model has been motivated by the fact that their application is more suited for the kind of agent systems and scenarios under study. Indeed the communication via explicit actions, similarly to discrete event automata, is used to model signals communication, while the possibility to trace an external behavior catches the interaction with the environment, which is the basic aim of extending the original formalism with world variables. 

The paper is organized as follows: in Section \ref{sec:wa} we introduce the modeling framework; in Section \ref{sec:semantics} we show and recall the main results on semantics of the proposed model; in Section \ref{sec:parallel} parallel composition of the presented automata is described, showing the main results on composability. The theory is illustrated throughout the paper with a simple example, the interested reader can find a more complex and realistic application in \cite{adhs12}. All the results presented in the paper use the notation of \cite{segala}.

\section{HIOAs with world variables}\label{sec:wa}

\begin{example}\label{ex:description}

Consider a sandy area where two cars move, as in Fig. \ref{fig:example}. We subsume an underline metric space $\Reals^2$. When a car takes a certain position, its pressure provokes a depression of the ground (fig. \ref{fig:lateral}). Hence the car changes the characteristics of the environment in a permanent way, since sand retains the shape. The other car is, then, able to {\em see} where a car has moved (fig. \ref{fig:shape}).
\begin{figure}[htbp]
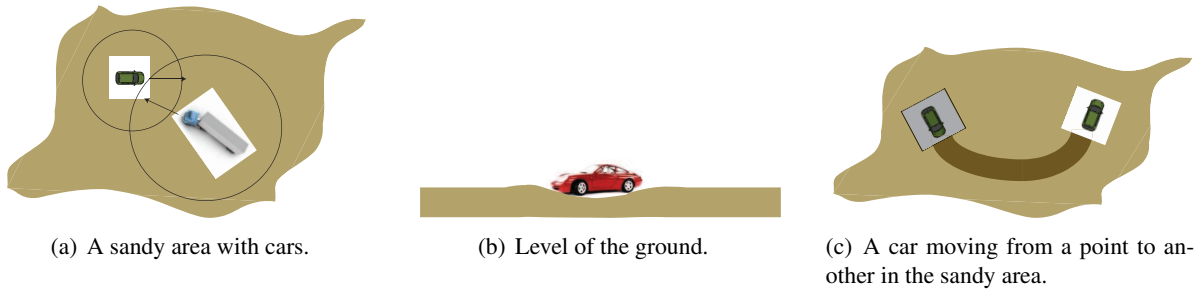

\begin{center}
\subfigure[A sandy area with cars. \label{fig:example}]{
\includegraphics[width=0.3\textwidth]{cars}}\hspace{5mm}
\subfigure[Level of the ground. \label{fig:lateral}]{
\includegraphics[width=0.3\textwidth]{lateral}}\hspace{5mm}
\subfigure[A car moving from a point to another in the sandy area. \label{fig:shape}]{
\includegraphics[width=0.3\textwidth]{cars_move}}
\caption{Characteristics of the scenario}
\end{center}
\end{figure}
We aim at avoiding collisions between the two cars. This might be done by equipping each car with some tools to send signals to the other vehicle when approaching, or adding to the system a supervisor knowing at each instant of time the position of both cars. We will call this kind of communication {\em explicit}. Another solution would be to think each car as an {\em intelligent} agent that {\em senses} the surrounding environment and is able to {\em understand} if the other car is too near. We call this kind of communication {\em implicit}. In other words each vehicle should use its sensors to catch the changes in its neighborhood and to calculate the possibility of another car to be in collision risk. The implicit communication is more {\em natural} to us, it does not need artificial machinery, it can be used even in case of hostile environments, where explicit communication is difficult or even impossible, but also when there is need to communicate without sending data through a network. Moreover implicit communication can be used as a redundant mean of communication, when the tools involved in explicit communication fail.
\end{example}

The scenario described in Example \ref{ex:description} is a typical problem of coordination of agents, even though simplified to enlighten only the main challenges the designer has to face in finding a suitable model for this situation. As stated in the Introduction, we decided to use the well known framework of Hybrid I/O Automata (HIOAs) of \cite{segala} to keep the underlying composability theory, very useful in multi-agent problems. 
%HIOAs are extended with the addition of {\em world variables} and their {\em level function} to model in the most {\bf natural} way the interaction between agents and environment, in terms of signals and sensors. We define hereafter a WA recalling also the description of a HIOA (in bold the novelties w.r.t. definition xxx in \cite{segala}). 
%Assume an underlying topological space $\M$. For simplicity the reader may think of $\M$ as a metric space, e.g. $\Reals^3$.
\begin{definition}\label{def:hioa}
{\em Hybrid I/O Automaton (HIOA) \cite{segala}}\\
A HIOA $\AutA$ is a tuple $((U,X,Y),(I,H,O),Q,\Theta,D,\mathcal{T})$ where
\begin{itemize}
%{\bf \item $(\ws{U},\ws{X},\ws{Y})$ are disjoint sets of \emph{world input, inner, and output variables}, respectively. Let $W$ denote the set $\ws{U}\cup\ws{X}\cup\ws{Y}$ of \emph{world variables}.}
\item $(U,X,Y)$ are disjoint sets of \emph{input, internal, and output variables}, respectively. 
%Let $U,X,Y$ denote the sets $\ws{U}\cup\as{U},\ws{X}\cup\as{X},\ws{Y}\cup\as{Y}$ of \emph{input, inner, and output variables}, respectively, and 
Let $V$ denote the set $U\cup X\cup Y$ of \emph{variables}.
\item $(I,H,O)$ are disjoint sets of \emph{input, hidden, and output actions}, respectively. Let $A$ denote the set $I\cup H\cup O$ of \emph{actions}.
\item $Q\subseteq \vals{X}$ is the set of \emph{states}.
\item $\Theta\subseteq Q$ is a nonempty set of \emph{initial states}.
\item $D\subseteq \vals{X}\times A \times \vals{X}$ is the \emph{discrete transition relation}.
\item $\Tt$ is a set of trajectories on
$V$ that satisfy the following axioms
\begin{description}
\item[T1] {\em (Prefix closure)} 
  For every $\tau\in\Tt$ and every $\tau' \leq \tau$, $\tau' \in \Tt$.
\item[T2] {\em (Suffix closure)} 
  For every $\tau \in \Tt$ and every $t\in\domain{\tau}$, $\tau\unrhd t \in \Tt$.
\item[T3] {\em (Concatenation closure)} 
  Let $\tau_0 , \tau_1 , \tau_2, \ldots$ be a sequence of trajectories
  in $\Tt$ such that, for each nonfinal index $i$, $\tau_i$ is closed
  and $\lstate{\tau_i} = \fstate{\tau_{i+1}}$.
  Then $\tau_0 \concat \tau_1 \concat \tau_2 \cdots \in \Tt$.
\end{description}
%{\bf \item $l:S\to\mathbb{N}$ is the {\em level function} extracting the level of a variable or action in any set $S$.}
\end{itemize}
\end{definition}
%Notice that if we group the input, inner and output variables in $U,X,Y$ we obtain an object that coincides with a HIOA (\ref{def:hioa}) in section \ref{sec:math}. Indeed the main differences with HIOAs is given by the fact that $U,X,Y$ sets are partitioned in two different kinds of variables: automaton variables and world variables. 

{\bf Notation}: For each variable $v$, we
assume both a {\em (static) type\/}, \type{v}, which gives the set of values it
may take on, and a {\em dynamic type}, \dtype{v}, which gives the
set of trajectories it may follow. 
A {\em valuation\/} $\val{v}$ for a set of variables $V$ is a function
that associates with each variable $v \in V$ a value in $\type{v}$. We write $\vals{V}$ for the set of valuations for $V$. 
Let $J$ be a left-closed interval of $\Time$ (the time axis) with left endpoint equal to
$0$. Then a {\em $J$-trajectory\/} for $V$ is a function $\tau:
J\rightarrow\vals{V}$,
such that for each $v \in V$, $\tau\restrrange v \in\dtype{v}$.
A {\em trajectory\/} for $V$ is a $J$-trajectory for $V$, for any
$J$. Trajectory $\tau$ is a {\em prefix\/} of trajectory $\tau'$, denoted
by $\tau \leq \tau'$, if $\tau$ can be obtained by restricting
$\tau'$ to a subset of its domain. We define $\tau \unrhd t  \deq  (\tau \restr [t,\infty)) - t$. The concatenation $ \concat$ of two trajectories is obtained
by taking the union of the first trajectory and the function obtained
by shifting the domain of the second trajectory until the start time
agrees with the limit time of the first trajectory; the last valuation
of the first trajectory, which may not be the same as the first valuation
of the second trajectory, is the one that appears in the concatenation.
Prefix, suffix and concatenation operations return trajectories. We define $\fval{\tau}$, the {\em first valuation\/} of $\tau$, to
be $\tau(0)$,
and if $\tau$ is closed ($J$ is a closed interval), we define $\lval{\tau}$, the {\em last
valuation\/} of $\tau$, to be $\tau(\ltime{\tau})$. Given a trajectory $\tau\in\Tt$ we denote $\fval{\tau}\restr X$ by
$\fstate{\tau}$ and, if $\tau$ is closed,
we denote $\lval{\tau}\restr X$ by $\lstate{\tau}$.
We write $f \restr P$ for the restriction of function $f$ to set $P$,
that is, the function $g$ with $\domain{g} = \domain{f} \cap P$ such that
$g(c) = f(c)$ for each $c \in \domain{g}$. 
If $f$ is a function whose range is a set of functions and $P$ is a set,
then we write $f \restrrange P$ for the function $g$ with $\domain{g}
= \domain{f}$ such that $g(c) = f(c) \restr P$ for each $c \in \domain{g}$. 
For more detail the interested reader can refer to \cite{segala}.

The reader can notice that the main difference with respect to the model introduced by \cite{alur, alur2} is that locations are not explicit, indeed they are given by state variables, trajectories and transitions. Moreover transitions from one state to another do not occur by crossing guards or leaving invariants, but they occur because of actions arising (see executions definition in Section \ref{sec:semantics}).

\begin{example}\label{ex:hioa}
Imagine now to describe the scenario in example \ref{ex:description} using hybrid automata. To represent HIOAs we use a variant of the TIOA language \cite{TIOA}, with some extensions for hybrid systems \cite{helicopter}. The HIOA of a car is reported in fig. \ref{fig:car1}. 
\begin{figure}[htbp]
{\bf type} Rad = $\Reals| 2\pi$\\
{\bf hioa} Car\\
{\bf variables}

\qquad      {\bf input}  collisionrisk: Bool, groundlevel: Bool

\qquad	{\bf internal} $\phi$: Rad, $p_T$: Real$^2$, $m$: Real, $vel$:Real

\qquad      {\bf output} $P$: Real$^2$, $K$: Real

{\bf trajectories}

%\qquad   $\xi(t,p)=\left\{\begin{array}{ll}
%\mbox{black} & \mbox{if } p\in f(\phi,p_T) \\
%\mbox{white} & \mbox{otherwise}
%\frac{gm}{p_T.1}-\frac{s(t,p)}{p_T.1} & \mathrm{if }\ p\leq p_T \land \neg \mathrm{Fail}\\
%0 & \mathrm{otherwise}
%\end{array}\right.$;

\qquad   $K(t)=z(m, f(\phi,p_T))$;

\qquad $\ms{vel}(t)=\left\{\begin{array}{ll}
			0 & \mbox{if } \ms{collisionrisk}\\
			0.5 & \mbox{if } \ms{groundlevel}\\
			1 & \mbox{otherwise}.
			\end{array}\right.$;
			
\qquad  $P(t)=p_T(t)$.
%\myammer{inserire colore e livello}

\caption{HIOA representing a car.}
\label{fig:car1}
%\end{minipage}
%\end{scriptsize}
\end{figure}
It has an output variable $K$ representing the ground pressure provided by the car and an output variable $P$ representing the car position. The input variables are: the level of the ground {\em groundlevel} as a boolean saying if the level is low (1) or high (0); the {\em collisionrisk} saying if another car is in collision risk (1) or not (0). A function $f$ is defined, giving the surface of the ground occupied by the car area starting from its position $p_T$ and its orientation angle $\phi$. We can imagine that $f$ returns a rectangle centered in $p_T$ with orientation $\phi$. The pressure variable is updated with a function $z$ depending on the mass $m$ and area of the car. The velocity $\ms{vel}$ of the car is 0 if collisionrisk is true. Similarly the car slows down when groundlevel is true.
For the sake of simplicity we used a boolean variable to represent the ground level changes, but any other function can be used, such as more general and complex diffusion equations.
Note that we need to provide the system with an {\em external supervisor} which, taking as input the position and pressure of each car in the area at each instant of time, calculates the collision risk and the ground level around it. Basically the supervisor needs to know each car direction and position at each instant of time for calculating the possibility of a collision with other cars moving in the same area and the level of the ground along the trajectory the car is following. We do not present the design of such a supervisor because it is out of the scope of this paper.
Note also that this is just a possible representation of the scenario described in example \ref{ex:description}. We used this simple way to show the need of using some external machinery (e.g. a supervisor) to model the interaction of agents with the environment.
\end{example}

In example \ref{ex:hioa} we are not able to represent implicit communication without adding some artificial machinery (in this case we used an external supervisor). Since our aim is to represent the system in a more natural way, we extend HIOA modeling framework to catch this aspect. To do this we specialize some variables of the HIOA, calling them {\em world variables}. The name is due to the fact that we want them to represent the connection between the agents and the surrounding world. Moreover world variables represent the changes in the environment as might be perceived by the agents. Hence the set of variables $V$ is partitioned in a set $W$ of world variables and a set $S$ of standard automaton variables. The set $W$ is partitioned in sets $(\ws{U},\ws{X},\ws{Y})$ of \emph{world input, internal, and output variables}, respectively, such that:
$\ws{U}\subseteq U, \ws{X}\subseteq X, \ws{Y}\subseteq Y$. To avoid confusion, we will add to automaton variables the subscript $a$: $\as{U}, \as{X}, \as{Y}$.

The main difference between world and automaton variables is that the type of world variables is a function of time and space, not only of time as in standard automaton variables. Hence world variables values (and trajectories) will depend both on the instant of time and the position in the underlying space. Formally, if we assume an underlying topological space $\M$, $\val{w}: (\mathcal{T}\times\M)\to B$ for every $w\in W$, where $\mathcal{T}$ is the time axes and $B$ is a set. For simplicity the reader may think of $\M$ as a metric space, e.g. $\Reals^3$.
An automaton $\AutA$ will use its world inputs $\ws{U}$ to receive stimuli from the world it lives in. Analogously it will use its world outputs $\ws{Y}$ to give stimuli to the world it lives in. Finally internal world variables $\ws{X}$ are used to represent the world characteristics of $\AutA$. 
%In this description, like in the HIOA theory, the automaton variables $A$ are used by the automaton to directly synchronize with other automata living in the same world. 
To keep the theory consistent with previous descriptions of automata, all the $X$ variables represent persistent characteristics of the system. We will call HIOAs with world variables HIOAWs.
\begin{example}\label{ex:hioaw}
We now represent the car in fig. \ref{fig:car1} with a HIOAW, extending the TIOA language to include world variables. Note that world variables are always described using their trajectories in time and space, i.e. they are described for any instant of time $t$ and any point in space $p$. 
Each car is represented by a HIOAW as in fig. \ref{fig:car}. It has an output world variable $k$ representing the ground pressure provided by the car and an output world variable representing the car color $\xi$. The input world variables are: the level of the ground $g$ and its color $c$. Each car perceives the ground level through a boolean variable $g$ saying if the ground is low (1) or high (0). We used the boolean representation for the sake of simplicity. Of course any other function, like diffusion equations, may be used. Each car can check the color of the ground at each point of the area by the variable $c$, which represents a kind of colored map of the area.
%We assume that the two cars cannot be in the same place at the same time, because of safety control policies, but if accident are taken into account, this possibility can be included in the description. A function $f$ is defined for cars, giving the surface of the ground occupied by the car area starting from its position $p_T$ and its orientation angle $\phi$. We can imagine that $f$ returns a rectangle centered in $p_T$ with orientation $\phi$. 
We assume that the color variable $\xi$ takes the value black for all the points inside the car area given by $f$ and white outside. The pressure variable $k$ is updated with a function $h$ depending on the mass $m$ and area of the car, associating to each point in the area of the car the value of its pressure in time, and to each point outside the area of the car a 0 value. 
Two actions {\em collision, level} represent the possibility that another car is in the neighborhood and that the level of the ground in the neighborhood is low, respectively. Action {\em collision} activates a boolean variable {\em stop} if there is any black point $p^*$ in the neighborhood of the car, which is calculated by the function $q$ returning a circle of radius $r$ (bigger than the semi-diagonal of the rectangle representing the area of the car) and centered in $p_T$, but excluding the area of the car given by function $f$. Action {\em level} activates a boolean variable {\em slow} if there is any point $p^*$ in the neighborhood of the car for which the ground level variable $g$ is true, i.e. the level of the ground is low.
Hence the velocity $\ms{vel}$ of the car is 0 if {\em stop} is true. Similarly the car slows down when {\em slow} is true. 
All the presented equations describing the car dynamics are very simple, but the description of the motion is out of the scope of this paper. Indeed they can be substituted by any other equations.
The reader can notice that in fig. \ref{fig:car1} the position of the car is explicit in variable $P$, which is an output that must be collected by the supervisor at each instant of time to check where the automaton is in the space. In the HIOAW of fig. \ref{fig:car} the position is embedded in the world variables and does not need to be explicitly put in an automaton variable. Indeed both color and pressure world variables carry the information about the position of the automaton in the space, due to their nature.
\end{example}
\begin{figure}[tb]
{\bf type} Rad = $\Reals| 2\pi$

{\bf hioaw} Car\\
{\bf world variables}

\qquad	{\bf input}  $g$: Bool, $c$: Color;

\qquad	{\bf output} $k$: Real, $\xi$: Color;

{\bf automaton variables}

\qquad	{\bf internal} $\phi$: Rad, $p_T$: Real$^2$, $m$: Real, $vel$:Real, $r$:Real, stop: Bool, slow: Bool;

{\bf actions}

\qquad     {\bf hidden} collision, level;

{\bf transitions}

\qquad {\bf hidden} collision

\qquad {\bf pre} $ \exists p^*\in q(p_T,r, f(\phi,p_T)) \mbox{ s.t. } c(t,p^*)=\mbox{black}$

\qquad {\bf eff} stop $=$ true;

\qquad {\bf hidden} level

\qquad {\bf pre}  $\exists p^*\in q(p_T,r, f(\phi,p_T)) \mbox{ s.t. } g(t,p^*)=$ true

\qquad{\bf eff} slow $=$ true;

{\bf trajectories}

\qquad   $\xi(t,p)=\left\{\begin{array}{ll}
\mbox{black} & \mbox{if } p\in f(\phi,p_T) \\
\mbox{white} & \mbox{otherwise}
%\frac{gm}{p_T.1}-\frac{s(t,p)}{p_T.1} & \mathrm{if }\ p\leq p_T \land \neg \mathrm{Fail}\\
%0 & \mathrm{otherwise}
\end{array}\right.$;

\qquad   $k(t,p)=h(m, f(\phi,p_T))$;

\qquad $\ms{vel}(t)=\left\{\begin{array}{ll}
			0 & \mbox{if stop}\\
			0.5 & \mbox{if slow}\\
			1 & \mbox{otherwise}.
			\end{array}\right.$
%\myammer{inserire colore e livello}

\caption{HIOAW representing a car.}
\label{fig:car}
%\end{minipage}
%\end{scriptsize}
\end{figure}

The reader can notice that the automaton in fig. \ref{fig:car} has some input world variables. Here we considered the environment as an abstract entity, modifying and being modified by the agents living in it. As in the human sensing, the agents moving in an environment can catch these modifications as changes with respect to the nominal conditions of the surrounding area and interact with them. In the same way the agents change the environment. World variables aim at representing this exchange of implicit information because they give a map of environmental changes at each point of the space and each instant of time, without need of artificial machineries such as a supervisor. 

\section{Semantics}\label{sec:semantics}

Executions of HIOAWs are defined as executions of HIOAs: an \emph{execution fragment} of a HIOAW $\AutA$ is an ($A,V$)-sequence $\alpha=\tau_0 a_1 \tau_1 a_2 \tau_2 \ldots$, where $a_i\in A$, $\tau_i\in \Tt$; if $\tau_i$ is not the last trajectory of $\alpha$, then $\lstate{\tau_i} \arrow{a_{i+1}} \fstate{\tau_{i+1}}$. An execution fragment $\alpha$ is defined to be an {\em execution\/} if $\fstate{\alpha}$ is a start state, that is, $\fstate{\alpha} \in \Theta$. Results on executions of HIOAs are valid also for HIOAWs. 

A {\em trace} of an execution fragment $\alpha$ captures the external behavior of a HIOAW, i.e. what it is needed to identify an automaton from outside. Calling $E=I\cup O$, $Z=U\cup Y$, a trace of a HIOAW $\AutA$ is then the ($E,Z$)-restriction of $\alpha$. All the results on traces on HIOAs are still valid and exactly stated for HIOAWs.
%All the results on executions, traces and simulation relations on HIOAs are still valid and exactly stated for WAs. This is possible because we have defined WAs as a well-defined extension of HIOAs. For the definitions the reader can refer to section \ref{sec:prelim} and to \cite{segala}.
%The main difference with the semantics of HIOAs is that for traces it is possible to introduce a taxonomy based on the level of variables and actions we want to be seen outside the WA. 
%As recalled in section \ref{sec:prelim} a {\em trace} of an execution fragment $\alpha$ captures the external behavior of $\AutA$, i.e. what it is needed to identify an automaton from outside. Calling $E=I\cup O$, $Z=U\cup Y$, a trace of a WA $\AutA$ is then the ($E,Z$)-restriction of $\alpha$. We will, then, call trace the ($E,Z$)-restriction of $\alpha$ at all levels, supposing that the external behavior is captured at all levels of the automaton. When needed it is possible to restrict the behavior to a specific level $i$ by defining the ($E[i],Z[i]$)-restriction of $\alpha$, called $[i]$-trace. Moreover it is possible to define restrictions at more than one level, e.g. defining $[1,2]$-traces as the ($E[1,2],Z[1,2]$)-restriction of $\alpha$. Many different combination of levels can be taken into account while defining traces, depending on what it meant to be visible from outside the automaton. 
%All the results here used refer to all level traces, hence, since the above defined taxonomy is a partition of the all level traces, they are also valid for restricted levels traces.
We say that a low-level specification $\AutA$ {\em implements} a high-level specification $\AutB$
if any behavior of $\AutA$ is also an allowed behavior of $\AutB$.
\begin{definition}\label{def:comp_imp}
Automata $\AutA_1$ and $\AutA_2$ are {\em comparable\/} if they have
the same external interface, that is, if world and local input and output sets of variables of $\AutA_1$ are equal to the corresponding sets of $\AutA_2$ and $E_1 = E_2$ at all levels.
%$\glob{Y_1} = \glob{Y_2}$, $\aut{Y_1}=\aut{Y_2}$, $\local{Y_1}=\local{Y_2}$, $\glob{U_1} = \glob{U_2}$, $\aut{U_1}=\aut{U_2}$, $\local{U_1}=\local{U_2}$, $E_1 = E_2$.
If $\AutA_1$ and $\AutA_2$ are comparable then we say that
$\AutA_1$ {\em implements} $\AutA_2$, denoted by $\AutA_1 \leq \AutA_2$, if
%the traces of $\AutA_1$ are included among those of $\AutA_2$, that is, if
$\htraces{\AutA_1} \subseteq \htraces{\AutA_2}$.
\end{definition}

Simulation relations between HIOAWs are defined as for HIOAs in Section 4.3 of \cite{segala}. We report here the definition:

\begin{definition}\label{def:simulation}
Let $\AutA$ and $\AutB$ be comparable automata. A {\em simulation\/} from $\AutA$
to $\AutB$ is a relation $\simrel\ \subseteq Q_\AutA \times Q_\AutB$
satisfying the following conditions, for all states $\val{x}\restr Q_\AutA\triangleq \val{x}_\AutA$ and
$\val{x}\restr Q_\AutB\triangleq \val{x}_\AutB$ of $\AutA$ and $\AutB$, respectively:
\begin{enumerate}
\item If $\val{x}_\AutA \in \Theta_\AutA$ then there exists a
  state $\val{x}_\AutB \in \Theta_\AutB$ such that $\val{x}_\AutA \simrel \val{x}_\AutB$.
\item If $\val{x}_\AutA \simrel \val{x}_\AutB$ and $\alpha$ is an execution fragment
of $\AutA$ consisting of one action surrounded by two point trajectories,
with $\fstate{\alpha} = \val{x}_\AutA$,
then $\AutB$ has a closed execution fragment $\beta$
with $\fstate{\beta} = \val{x}_\AutB$,
$\trace{\beta} = \trace{\alpha}$, and
$\lstate{\alpha} \simrel \lstate{\beta}$.
\item If $\val{x}_\AutA \simrel \val{x}_\AutB$ and $\alpha$ is an execution fragment
of $\AutA$ consisting of a single closed trajectory,
with $\fstate{\alpha} = \val{x}_\AutA$,
then $\AutB$ has a closed execution fragment $\beta$
with $\fstate{\beta} = \val{x}_\AutB$,
$\trace{\beta} = \trace{\alpha}$, and
$\lstate{\alpha} \simrel \lstate{\beta}$.
\end{enumerate}
\end{definition}
 
Results on trace inclusion for simulation of HIOAs are valid also for HIOAWs. We also report here an important corollary on simulation relations which will be used in the rest of the paper.
\begin{corollary}
  \label{cor:ha-sim-ht}
  Let $\AutA$ and $\AutB$ be comparable automata and let $\simrel$ be a
  simulation from $\AutA$ to $\AutB$.
  Then $\htraces{\AutA}\subseteq\htraces{\AutB}$.
\end{corollary}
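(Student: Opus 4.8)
The plan is to derive the corollary from the execution-correspondence argument that underlies trace inclusion for ordinary HIOAs, after noting that nothing in that argument uses any property separating world variables from standard automaton variables: Definition~\ref{def:simulation} speaks only of states, executions, traces and discrete transitions, and each of these is defined for HIOAWs exactly as for HIOAs, with $W$ simply sitting inside $V$ and $U\cup Y$ inside $Z$. So the whole task reduces to replaying that proof in the present setting, and the bulk of the work is the routine construction of a trace-equivalent execution of $\AutB$ from a given execution of $\AutA$.

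Concretely, I would fix a trace $\beta\in\htraces{\AutA}$ and an execution $\alpha=\tau_0\,a_1\,\tau_1\,a_2\,\tau_2\cdots$ of $\AutA$ with $\trace{\alpha}=\beta$, and build a matching execution of $\AutB$ piece by piece. For the base step, $\fstate{\tau_0}\in\Theta_\AutA$, so condition~1 of Definition~\ref{def:simulation} supplies a state of $\AutB$ in $\Theta_\AutB$ related to $\fstate{\tau_0}$ by $\simrel$. Then I would process $\alpha$ alternately: each closed trajectory $\tau_i$ is matched, using condition~3 at the currently related $\AutB$-state, by a closed execution fragment of $\AutB$ with the same trace as $\tau_i$ and ending in a state related to $\lstate{\tau_i}$; each discrete step $\lstate{\tau_i}\arrow{a_{i+1}}\fstate{\tau_{i+1}}$, viewed as an action between two point trajectories, is matched using condition~2 by a closed execution fragment of $\AutB$ with the same trace — possibly containing extra hidden steps when $a_{i+1}$ is hidden, which is harmless since hidden actions contribute nothing to a trace. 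Because condition~3 (resp. condition~2) always hands back a fragment whose first state is the previously related one, the fragments splice together at matching states, and by axiom~T3 together with the prefix/suffix closure properties the concatenation is again a legitimate execution fragment of $\AutB$; since its first state lies in $\Theta_\AutB$ it is an execution, and by construction its trace is $\beta$, whence $\beta\in\htraces{\AutB}$.

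The step I expect to be the real obstacle is the infinite case: when $\alpha$ is an infinite sequence, or contains trajectories with unbounded or infinite domains, the construction produces infinitely many closed $\AutB$-fragments, and one must argue that their concatenation is well defined and is still an execution of $\AutB$ (this is precisely what axiom~T3 is designed to give, applied to the induced countable sequence of closed trajectories) and, more delicately, that the trace of this limit execution coincides with $\beta$ itself rather than merely agreeing with every finite prefix of $\beta$. In parallel one has to unwind the bookkeeping in Definition~\ref{def:comp_imp}: comparability ensures that $\trace{\cdot}$ for $\AutA$ and for $\AutB$ is taken over the same external interface ($E$ and $Z$ coinciding at all levels), so that ``same trace'' is an identity of objects of the same type. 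Both of these points are handled exactly as in the corresponding development for HIOAs in \cite{segala}, which is why the statement is phrased here as a corollary rather than reproved from scratch.
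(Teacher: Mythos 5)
Your proposal is correct and matches the paper's treatment: the paper gives no proof of Corollary~\ref{cor:ha-sim-ht} at all, importing it verbatim from the HIOA theory of \cite{segala} on the grounds that states, executions, traces and simulations for HIOAWs are defined exactly as for HIOAs --- which is precisely the observation you open with. Your sketch of the underlying execution-correspondence argument, including the correctly flagged subtleties (non-closed final trajectories, well-definedness and trace of the infinite concatenation, and comparability making ``same trace'' meaningful), is a faithful account of what that imported proof does.
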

%For the notation the reader can refer to \cite{segala}.

\subsection{Padding of executions}

We introduce now the notion of {\em padding} of executions that will be used in the following proofs.

\begin{definition}
A {\em padded execution} of a HIOAW $\AutA$ is an $(A\cup\{\padact\},V)-$sequence $\padex=\tau_0 a_1 \tau_1 a_2 \tau_2 a_3\ldots$ such that if $a_i=\padact$ then $\lstate{\tau_{i-1}}=\fstate{\tau_i}$.
%Let $\alpha$ be an execution $\tau_0 a_1 \tau_1 a_2 \tau_2 a_3 \ldots$, where $\forall i, \tau_i\in\trajs{V}, a_i\in A$. A {\em padded execution} $\padex{\alpha}$ of $\alpha$ is an execution defined in $(A\cup\{\padact\},V)$, s.t. $\padex{\alpha}\restr(A,V)=\alpha$.
\end{definition}

\begin{definition}{\em Padding.}\\
We call {\em padding} of an execution $\alpha$ any padded execution obtained by $\alpha$ by extending the actions set with $\padact$.
\end{definition}
For example a padded execution of an execution $\alpha=\tau_0 a_1 \tau_1 a_2 \tau_2 a_3 \ldots$ is $\padex=\tau'_0\padact\tau''_0 a_1 \tau_1 a_2 \tau_2 a_3 \ldots$, where $\tau'_0\concat\tau''_0=\tau_0$. 

\begin{definition}
The restriction of a padded execution $\padex$ to a set of actions $A'$ and a set of variables $V'$ is the $(A',V')$-restriction of $\padex$.
\end{definition}

\begin{lemma}\label{lm:padtoexec}
Let $\padex$ be a padded execution of $\AutA$. Then there exists $\alpha$, execution of $\AutA$, for which $\padex$ is a padding. 
\end{lemma}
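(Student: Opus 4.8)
The idea is to produce $\alpha$ from $\padex$ by erasing every occurrence of $\padact$ and welding together the trajectories that thereby become adjacent. Write $\padex = \tau_0 a_1 \tau_1 a_2 \tau_2 \cdots$. The occurrences of the genuine actions $a_i\in A$ cut the sequence of trajectory-occurrences of $\padex$ into maximal blocks $\tau_i,\tau_{i+1},\dots,\tau_j$ that are separated, inside a block, only by $\padact$'s; replace each such block by the single trajectory $\sigma\deq\tau_i\concat\tau_{i+1}\concat\cdots\concat\tau_j$, and let $\alpha$ be the resulting $(A,V)$-sequence. (I read ``padded execution'' in the only sensible way, namely as an execution of $\AutA$ in which $\padact$-labelled stuttering steps are additionally allowed: for $a_i\in A$ the usual execution-fragment conditions hold and $\fstate{\padex}\in\Theta$, while ``if $a_i=\padact$ then $\lstate{\tau_{i-1}}=\fstate{\tau_i}$'' is exactly the stuttering requirement.) To see that $\sigma\in\Tt$, note that every member of the block other than $\tau_j$ is followed by a $\padact$ and hence is closed (its last state must be defined), and the $\padact$-clause gives $\lstate{\tau_k}=\fstate{\tau_{k+1}}$ for consecutive members; thus axiom \textbf{T3} applies and yields $\sigma\in\Tt$. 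Blocks may be infinite (for instance if $\padex$ ends with infinitely many $\padact$'s, or has no genuine action at all), and the infinitary form of \textbf{T3} covers those cases; a block consisting of a single trajectory is trivial.

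Next I would verify that $\alpha$ is an execution of $\AutA$. Every trajectory of $\alpha$ lies in $\Tt$ by the previous step. For two consecutive trajectories $\sigma,\sigma'$ of $\alpha$ separated by an action $a$: since $\concat$ keeps, at each junction, the last valuation of the earlier trajectory, one checks that $\lstate{\sigma}=\lstate{\tau_j}$, where $\tau_j$ is the last member of the block that became $\sigma$ (using $\lstate{\tau_k}=\fstate{\tau_{k+1}}$ to handle a point-trajectory tail), and likewise $\fstate{\sigma'}=\fstate{\tau_{j+1}}$; the action $a$ is the occurrence $a_{j+1}$ of $\padex$, so the execution-fragment condition for $\padex$ gives $\lstate{\tau_j}\arrow{a}\fstate{\tau_{j+1}}$, i.e.\ $\lstate{\sigma}\arrow{a}\fstate{\sigma'}$. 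Finally $\fstate{\alpha}=\fstate{\tau_0}=\fstate{\padex}\in\Theta$. Hence $\alpha\in\hexec{\AutA}$.

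It remains to observe that $\padex$ is a padding of $\alpha$. Indeed $\padex$ is obtained from $\alpha$ by adjoining $\padact$ to the action set and splitting each trajectory $\sigma$ of $\alpha$ back into $\tau_i\concat\cdots\concat\tau_j=\sigma$, with $\padact$ steps inserted at the split points, each such step being stuttering because $\lstate{\tau_k}=\fstate{\tau_{k+1}}$. This is precisely the pattern described in the definition of padding (cf.\ the example following it), which finishes the argument.

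The only delicate point is the bookkeeping around \textbf{T3}: one has to notice that the trajectories being concatenated are closed, so that the hypothesis of \textbf{T3} is genuinely met, and to use the infinitary version of \textbf{T3} for unbounded runs of $\padact$; in addition one must be slightly careful that the $\padact$-clause constrains only the state part ($\restr X$) of the valuations, and that $\concat$ retains the earlier trajectory's last valuation at each junction, so that $\fstate{}$ and $\lstate{}$ of the welded trajectories land exactly where the transition conditions need them. Everything else is a direct unfolding of the definitions.
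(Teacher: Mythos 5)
Your construction --- erase every $\padact$ and weld the adjacent trajectories by $\concat$, checking closedness and the stuttering condition so that axiom \textbf{T3} applies --- is exactly what the paper's one-line proof invokes when it sets $\alpha=\padex\restr(A,V)$ and appeals to the definition of restriction of padded executions; you have simply unfolded that restriction operation and verified the execution conditions explicitly. The argument is correct and follows essentially the same route as the paper, with the added (and reasonable) care of noting that the definition of padded execution must be read as also imposing the usual transition conditions at genuine actions.
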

\begin{proof}
Let $A,V$ be the sets of actions and variables of $\AutA$, respectively. Then, by definition of restriction of padded executions and by definition of executions, $\alpha=\padex\restr(A,V)$ is an execution of $\AutA$. By definition of padding $\padex$ is a padding of $\alpha$.
\end{proof}

\begin{lemma}\label{lm:padrestr}
Let $\alpha$ be an execution of $\AutA$ defined in $(A,V)$ and $\padex$ a padding of $\alpha$. Let $A'\subseteq A,V'\subseteq V$, then $\alpha\restr(A',V')=\padex\restr(A',V')$.
\end{lemma}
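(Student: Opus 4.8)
The plan is to unwind the definitions and show that padding only inserts $\padact$-actions together with a split of an existing trajectory into two pieces whose concatenation is the original, so that restricting to any sub-alphabet $(A',V')$ — which drops all the $\padact$ symbols and merges the adjacent trajectory fragments back together — leaves the restriction of $\alpha$ unchanged. First I would recall that, by the definition of \emph{padding}, $\padex$ is obtained from $\alpha=\tau_0 a_1 \tau_1 a_2 \tau_2 \cdots$ by a sequence of steps, each of which takes some trajectory $\tau_i$ of the current sequence, chooses a split $\tau_i = \tau_i' \concat \tau_i''$ (with $\lstate{\tau_i'} = \fstate{\tau_i''}$ automatically, since concatenation glues at a common valuation), and replaces the single factor $\tau_i$ by $\tau_i' \,\padact\, \tau_i''$. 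Since $A' \subseteq A$ and $\padact \notin A$, every inserted $\padact$ is removed by the $(A',V')$-restriction, and the restriction operation on $(A,V)$-sequences concatenates the trajectory pieces that become adjacent once the intervening action is deleted. Thus after restriction the pair $\tau_i' \restrrange V',\ \tau_i'' \restrrange V'$ is re-glued into $(\tau_i' \concat \tau_i'')\restrrange V' = \tau_i \restrrange V'$, which is exactly what appears in $\alpha\restr(A',V')$ at that position.

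The cleanest way to carry this out is by induction on the number of $\padact$-insertions, or equivalently by invoking Lemma~\ref{lm:padtoexec}: it already gives $\alpha = \padex\restr(A,V)$ whenever $\padex$ is a padded execution whose underlying execution is $\alpha$. Then I would use the fact that restriction is transitive, i.e. for $A'\subseteq A$ and $V'\subseteq V$ one has $\bigl(\padex\restr(A,V)\bigr)\restr(A',V') = \padex\restr(A',V')$, because restricting an $(A,V)$-sequence further to $(A',V')$ is the same as restricting the original sequence directly to $(A',V')$ (the $\padact$ symbols are already gone after the first restriction, and they are not in $A'$ anyway, so no information about the trajectory-gluing is lost). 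Combining these two facts:
\[
\alpha\restr(A',V') = \bigl(\padex\restr(A,V)\bigr)\restr(A',V') = \padex\restr(A',V'),
\]
which is the claim.

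The only genuinely delicate point — and the step I expect to need the most care — is verifying that restriction commutes with the concatenation-gluing in the way I described, i.e. that $(\tau'\concat\tau'')\restrrange V' = (\tau'\restrrange V')\concat(\tau''\restrrange V')$ as trajectories on $V'$, and that deleting an action from an $(A,V)$-sequence indeed concatenates the two surrounding trajectories rather than doing something else. This is essentially the content of the definition of the $(A',V')$-restriction of a hybrid sequence in~\cite{segala}; it holds because concatenation takes the union of the first trajectory with a shifted copy of the second and keeps the first trajectory's last valuation at the join point, and pointwise restriction to $V'$ of a valuation clearly commutes with these domain/shift operations. Once that bookkeeping is in place the lemma follows immediately, so I would state the commutation explicitly as the one nontrivial observation and let the rest be a one-line consequence of Lemma~\ref{lm:padtoexec} and transitivity of restriction.
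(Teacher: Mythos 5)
Your proposal is correct and follows essentially the same route as the paper, whose proof simply states that the claim is straightforward from the definitions of padding and of restriction; you fill in exactly those definitional details (the $\padact$-removal re-gluing split trajectories, the identity $\alpha=\padex\restr(A,V)$ from Lemma~\ref{lm:padtoexec}, and transitivity of restriction). No gaps.
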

\begin{proof}
Straightforward by definition of restriction of a padded execution and of an execution and by definition of padding.
\end{proof}

\begin{definition}
A trace of a padded execution $\padex$ is defined as $\padex\restr(E,Z)$.
\end{definition}

\begin{lemma}\label{lm:padtrace}
Let $\alpha$ be an execution of $\AutA$ and $\padex$ a padding of $\alpha$, then $\htrace{\alpha}=\htrace{\padex}$.
\end{lemma}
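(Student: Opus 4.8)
The plan is to reduce the statement to the lemmas already proved. Recall that a trace of an execution $\alpha$ is by definition $\alpha\restr(E,Z)$, and a trace of a padded execution $\padex$ is by definition $\padex\restr(E,Z)$, where $E=I\cup O$ and $Z=U\cup Y$. So the claim $\htrace{\alpha}=\htrace{\padex}$ is literally the claim $\alpha\restr(E,Z)=\padex\restr(E,Z)$.

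First I would observe that $E\subseteq A$ and $Z\subseteq V$, since $E=I\cup O$ is a subset of the action set $A=I\cup H\cup O$ and $Z=U\cup Y$ is a subset of the variable set $V=U\cup X\cup Y$. Then I would invoke Lemma \ref{lm:padrestr}, which says precisely that for an execution $\alpha$ of $\AutA$ in $(A,V)$, a padding $\padex$ of $\alpha$, and any $A'\subseteq A$, $V'\subseteq V$, one has $\alpha\restr(A',V')=\padex\restr(A',V')$. Applying this with $A'=E$ and $V'=Z$ gives $\alpha\restr(E,Z)=\padex\restr(E,Z)$, which unfolds to $\htrace{\alpha}=\htrace{\padex}$ by the two definitions of trace. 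That is the whole argument.

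There is essentially no obstacle here; the only thing to be careful about is that the trace of a padded execution is defined by restricting to $(E,Z)$ exactly as for ordinary executions (so the $\padact$ symbol, not being in $E$, simply disappears under the restriction, and the adjacent trajectories it separated — which share an endpoint by the definition of padded execution — concatenate in the $(E,Z)$-sequence), and this is already baked into Lemma \ref{lm:padrestr}. Hence the proof is a one-line appeal to that lemma together with the definitions of $\htrace{\cdot}$ for executions and for padded executions, using $E\subseteq A$ and $Z\subseteq V$.

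I would therefore write: \emph{By definition, $\htrace{\alpha}=\alpha\restr(E,Z)$ and $\htrace{\padex}=\padex\restr(E,Z)$. Since $E\subseteq A$ and $Z\subseteq V$, Lemma \ref{lm:padrestr} applied with $A'=E$ and $V'=Z$ gives $\alpha\restr(E,Z)=\padex\restr(E,Z)$, hence $\htrace{\alpha}=\htrace{\padex}$.}
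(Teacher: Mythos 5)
Your proof is correct and matches the paper's argument exactly: the paper likewise dismisses this as "straightforward by Lemma \ref{lm:padrestr} and the definition of trace of a padded execution," which is precisely your instantiation with $A'=E$ and $V'=Z$. You have simply spelled out the details the paper leaves implicit.
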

\begin{proof}
Straightforward by lemma \ref{lm:padrestr} and definition of trace of a padded execution.
\end{proof}

\begin{lemma}
Let $\padex$ be a padding of $\alpha$, execution of $\AutA$, and let $\padex'$ be a prefix of $\padex$. Then $\padex'\restr(A,V)$ is a prefix of $\alpha$.
\end{lemma}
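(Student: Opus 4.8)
The plan is to unwind the definitions of padding and of restriction of a padded execution, and to track how taking a prefix of a padded execution corresponds to taking a prefix of the underlying execution.

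First I would fix notation: write $\alpha=\tau_0 a_1 \tau_1 a_2 \tau_2 \ldots$ for the execution of $\AutA$ defined over $(A,V)$, and let $\padex$ be a padding of $\alpha$, so $\padex$ is obtained from $\alpha$ by inserting occurrences of $\padact$ at points where a trajectory of $\alpha$ has been split into two consecutive trajectories joined by $\padact$ (with $\lstate{}$ of the first equal to $\fstate{}$ of the second). By Lemma~\ref{lm:padrestr} (or directly, since restricting to $(A,V)$ deletes exactly the $\padact$ steps and glues the split trajectories back together), we have $\padex\restr(A,V)=\alpha$. Now let $\padex'$ be a prefix of $\padex$. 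A prefix of an $(A\cup\{\padact\},V)$-sequence is an initial segment of it, possibly truncating the final trajectory to a prefix trajectory; so $\padex'$ shares its initial trajectories and actions with $\padex$ up to some index, with the last trajectory replaced by a prefix of the corresponding trajectory of $\padex$.

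The key step is then to observe that the operation "restrict to $(A,V)$" commutes appropriately with taking such an initial segment: deleting the $\padact$-steps of $\padex'$ and concatenating the split trajectories yields an initial segment of $\alpha=\padex\restr(A,V)$, possibly with its final trajectory truncated to a prefix. Concretely, $\padex'\restr(A,V)$ is of the form $\tau_0 a_1 \cdots a_k \tau_k'$ where $\tau_k' \le \tau_k$, which is by definition a prefix of $\alpha$. One should check the two boundary situations: (i) $\padex'$ ends in the middle of a pair $\tau_i'\padact\tau_i''$ with $\tau_i'\concat\tau_i''=\tau_j$ a trajectory of $\alpha$ — then $\padex'\restr(A,V)$ ends in $\tau_i'$ (or in $\tau_i'\concat(\text{prefix of }\tau_i'')$), which is a prefix of $\tau_j$, hence $\padex'\restr(A,V)\le\alpha$; (ii) $\padex'$ ends at an ordinary trajectory $\tau_j$ or a prefix thereof — immediate.

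The main obstacle, such as it is, is purely bookkeeping: making precise that the restriction operation on padded sequences distributes over the "initial-segment-plus-truncated-tail" description of a prefix, i.e. that $(\padex')\restr(A,V)$ is exactly the corresponding initial segment of $(\padex)\restr(A,V)$ with a possibly truncated final trajectory. Since $\padact\notin A$, every $\padact$-step is removed and the surrounding closed trajectories are concatenated (using $\lstate{\tau_{i-1}}=\fstate{\tau_i}$ from the padded-execution definition), and this is insensitive to whether we perform the truncation before or after the restriction; so no real difficulty arises. Hence $\padex'\restr(A,V)$ is a prefix of $\alpha$, as claimed. I expect the author's proof to simply say this is straightforward from the definitions of prefix, padding, and restriction of a padded execution, analogously to Lemmas~\ref{lm:padrestr} and~\ref{lm:padtrace}.
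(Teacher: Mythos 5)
Your argument is correct: unwinding the definitions of padding, prefix, and $(A,V)$-restriction, and checking that restriction commutes with truncation (including the case where the prefix cuts inside a split pair $\tau_i'\padact\tau_i''$), is exactly what is needed. The paper itself states this lemma without any proof, evidently regarding it as immediate from the definitions, so your write-up supplies precisely the routine bookkeeping the authors left implicit.
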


\begin{lemma}\label{lm:padbuild}
Given $n$ executions, it is always possible to find $n$ paddings of these executions such that all corresponding trajectories have the same length.
%Let $\alpha=\tau_0 a_1 \tau_1 a_2 \tau_2\ldots,\beta=\upsilon_0 b_1\upsilon_1 b_2 \upsilon_2\ldots$ be executions of $\AutA$. It is always possible to find $n$ padded executions $\padex{\alpha}=\tau'_0 a'_1 \tau'_1 a'_2 \tau'_2\ldots$ such that:
%\begin{itemize}
%\item $\domain{\tau'_0}=\min\{\domain{\tau_0},\domain{\upsilon_0}\}$
%\item if $a_1=b_1$ then $a'_1=a_1$
%\item if $a_1\neq b_1$ then $a'_1=\padact$
%\item ...
%\end{itemize}
\end{lemma}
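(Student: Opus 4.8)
The plan is to handle all $n$ executions simultaneously and to generate every padding from a single common time grid. Write $\alpha^i=\tau^i_0 a^i_1 \tau^i_1 a^i_2 \cdots$ for the $i$-th execution, over actions $A^i$ and variables $V^i$ ($1\le i\le n$), and put $t^i_0=0$ and $t^i_{k+1}=t^i_k+\ltime{\tau^i_k}$ (for a possibly open final trajectory take the supremum of its domain instead), so that the $k$-th trajectory of $\alpha^i$ covers cumulative time $[t^i_k,t^i_{k+1}]$ and its $k$-th action occurs at time $t^i_k$; here $\ltime{\cdot}$ is available because every trajectory of an execution other than possibly the last is closed. I also take for granted, as is the case in every situation where this lemma is applied --- in particular when the $\alpha^i$ are projections of one common execution --- that the $\alpha^i$ share a common limit time, so that the merged grid built next is again a sequence indexed by the naturals or an initial segment of them.

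First I would form the \emph{merged grid}: interleave the $n$ weakly increasing sequences $(t^i_k)_k$ into one weakly increasing sequence $s_0\le s_1\le s_2\le\cdots$, a time value occurring in several of these sequences, or several times within one of them (where some $\alpha^i$ performs a burst of discrete steps linked by point trajectories), being listed with enough multiplicity to accommodate the longest such burst among the $\alpha^i$; then each consecutive pair $[s_\ell,s_{\ell+1}]$ lies inside exactly one slot $[t^i_k,t^i_{k+1}]$ of each $\alpha^i$. Next I would read off the padding $\padex^i=\sigma^i_0 b^i_1 \sigma^i_1 b^i_2 \cdots$: take $\sigma^i_\ell$ to be the part of the trajectory of $\alpha^i$ active over $[s_\ell,s_{\ell+1}]$, restricted to that span and re-based to start at time $0$ (which is in $\Tt^i$ by prefix and suffix closure, and is a point trajectory when $s_\ell=s_{\ell+1}$), and take $b^i_\ell$ to be the original action $a^i_k$ when the grid point $s_\ell$ uses up one of the genuine discrete steps of $\alpha^i$ at that time (counted with multiplicity), and a fresh $\padact$ otherwise. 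By construction the $\ell$-th trajectory of \emph{every} $\padex^i$ has limit time $s_{\ell+1}-s_\ell$, which is the conclusion.

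It then remains to check that each $\padex^i$ is a padding of $\alpha^i$. It is a padded execution: if $b^i_\ell=\padact$ then $\sigma^i_{\ell-1}$ and $\sigma^i_\ell$ are the two pieces of one and the same trajectory of $\alpha^i$ cut at the interior point $s_\ell$, so $\lstate{\sigma^i_{\ell-1}}=\fstate{\sigma^i_\ell}$, which is exactly the constraint on $\padact$-steps, and if $b^i_\ell\in A^i$ the step $\lstate{\sigma^i_{\ell-1}}\arrow{b^i_\ell}\fstate{\sigma^i_\ell}$ is the one already present in $\alpha^i$. Moreover $\padex^i$ is obtained from $\alpha^i$ only by splitting its trajectories at interior grid points and inserting $\padact$-steps between the resulting pieces --- equivalently $\padex^i\restr(A^i,V^i)=\alpha^i$ --- so by the definition of padding (cf. Lemma~\ref{lm:padtoexec}), $\padex^i$ is a padding of $\alpha^i$.

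The step I expect to be the real obstacle is the merged grid. One must argue that interleaving $n$ weakly increasing real sequences with the right multiplicities yields a single sequence of order type at most $\omega$ --- not a longer order type arising from a Zeno accumulation in one execution followed by further activity in another --- and one must line up coincident discrete steps by padding the remaining executions with enough $\padact$-steps carrying point trajectories; both are routine once the $\alpha^i$ share a limit time. Some such compatibility hypothesis is genuinely needed, however, since padding only subdivides trajectories and inserts zero-length ones, and hence never changes an execution's total duration: executions with different limit times can never be padded so that corresponding trajectories have equal length.
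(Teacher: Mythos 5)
The paper states Lemma~\ref{lm:padbuild} without any proof, so there is nothing to compare against line by line; your merged-grid construction (interleave the cumulative break-point sequences of the $n$ executions with multiplicities, cut every trajectory at every grid point, insert $\padact$ at the cuts that do not consume a genuine action) is the natural and, as far as I can see, the only sensible argument, and your verification that each $\padex^i$ is a padded execution with $\padex^i\restr(A^i,V^i)=\alpha^i$ is correct. You are also right, and more careful than the paper, to observe that the statement is false as literally written: padding never changes an execution's limit time, so executions of different durations cannot be aligned, and some compatibility hypothesis must be read into the lemma from its use in Theorem~\ref{th:sub_imp} (where $\alpha_1,\alpha_B$ come from Lemma~\ref{lm:compexec} with an identical trajectory skeleton and $\alpha_2$ shares the trace of $\alpha_1$).

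One caveat in your own caveat: ``routine once the $\alpha^i$ share a limit time'' is still slightly too optimistic. If $\alpha^1$ is Zeno with its discrete steps accumulating at the common limit time $T$ while $\alpha^2$ performs a discrete action exactly at time $T$, the merged grid has order type $\omega+1$ and no alignment exists: the action of $\alpha^2$ at $T$ would have to occupy some finite index $j$, forcing the $j$-th trajectory of $\padex^1$ to end at time $T$, after which the infinitely many remaining actions of $\alpha^1$, all occurring strictly before $T$, cannot be placed. So the hypothesis one actually needs is a common limit time together with the exclusion of this accumulation clash (e.g.\ non-Zeno executions, or executions whose grids are projections of a single common grid, which is what holds in every application in the paper). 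With that said, your construction and the checks you perform are the right proof of the lemma in the regime where it is used.
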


\section{Parallel composition}\label{sec:parallel}
%In our framework Parallel composition models the interaction and communication of two agents living in the same world, i.e. of two WAs at the same level. To this end we introduce compatibility conditions to prevent undesired interactions between different levels for the WAs that have to be composed. 

In this section we introduce parallel composition for HIOAWs. First of all some compatibility conditions have to be stated.

\begin{definition}\label{def:parcomp}
Two HIOAWs $\AutA_1$ and $\AutA_2$ are compatible if 
\begin{enumerate}
%\item $V_1[1,.]\cap V_2[1,.] = \emptyset$, $A_1[1,.]\cap A_2[1,.] = \emptyset$,
\item $(U_{w1}\cup U_{w2}) \cap (Y_{w1}\cup Y_{w2}) = \emptyset$.
\item $H_1\cap A_2 = H_2\cap A_1 = \emptyset$,
\item $X_1\cap V_2 = X_2\cap V_1 = \emptyset$,
\item $O_1\cap O_2 = \emptyset$,
\item $Y_1\cap Y_2 = \emptyset$.
\end{enumerate}
\end{definition}

The reader may notice that conditions 2 to 5 are the classical compatibility conditions for HIOAs.  The first condition %states that all inner levels (higher than 0) are disjoint, and therefore no communication can occur at such levels. This means that generated worlds are disjoint. The second condition 
states that no explicit communication between the two HIOAWs is possible via world variables. Indeed, by definition, explicit communication between HIOAWs occurs only via automaton I/O variables, whereas world variables are used for implicit communication. 
%Since by the first condition communication may occur only at level 0, all the other properties are interesting only for level 0, even if not specified.
These conditions, when not satisfied by the HIOAWs, can be obtained by changing variables names. 

\begin{definition}\label{def:parallel}
{\em Parallel composition}\\
If $\AutA_1,\AutA_2$ are two compatible HIOAWs, then their composition $\AutA_1\|\AutA_2$ is defined as the structure $\AutA$ where:
\begin{enumerate}
\item $\ws{U} = U_{w1}\cup U_{w2}$, $\ws{X} = X_{w1}\cup X_{w2}$, $\ws{Y} = Y_{w1}\cup Y_{w2}$
\item $\as{Y}=Y_{a1}\cup Y_{a2}$, $\as{X}=X_{a1}\cup X_{a2}$, $\as{U}=(U_{a1}\cup U_{a2})\setminus\as{Y}$
\item $O=O_1\cup O_2$, $I=(I_1\cup I_2)\setminus O$ and $H=H_1\cup H_2$
\item $Q=\{\val{x}\in \vals{X}\mid\val{x}\restr X_1\in Q_1\land \val{x}\restr X_2\in Q_2\}$
\item $\Theta=\{\val{x}\in Q \mid \val{x}\restr X_1\in \Theta_1\land \val{x}\restr X_2\in \Theta_2\}$
\item $D = \{(\val{x},a,\val{x}') \mid $ for each $i\in\{1,2\}$ either $a\in A_i$ and $\val{x}\restr X_i \tran{a} \val{x}^{\prime}\restr X_i $, or $a\notin A_i$ and $\val{x}\restr X_i = \val{x}^{\prime}\restr X_i\}$.
\item $\Tt=\{\tau\mid$ there exists $\tau_1\in\Tt_1,\tau_2\in\Tt_2$ such that 
$\tau\restrrange (V_i\setminus (Y_{w1}\cap Y_{w2}))=\tau_i\restrrange (V_i\setminus  (Y_{w1}\cap Y_{w2})), i\in\{1,2\}$ and
$\tau\restrrange  (Y_{w1}\cap Y_{w2})=\tau_1\restrrange  (Y_{w1}\cap Y_{w2})+\tau_2\restrrange (Y_{w1}\cap Y_{w2})\}$
\end{enumerate}
%and $l(v)=l_1(v)$ if $v\in V_1$, $l(v)=l_2(v)$ if $v\in V_2$.
\end{definition}
%{\bf Notation}: in the above definition $\tau,\tau_i$ indicate trajectories, the symbol $\restrrange$ indicates the projection as defined in Section 2.1 of \cite{segala}.

This definition of parallel composition is very similar to the one for HIOAs. The only two differences are given by the first and the last conditions.
The first condition depends on compatibility: there is no communication between the two HIOAWs via world variables.
The last condition indeed is the main difference with HIOAs composition. 
It might be that the two composing automata have some output world variables with the same kind of information for the external world. These output world variables will have the same name and then their intersection is not empty.
For those variables it is necessary to sum the trajectories as defined in the following. We call sum any generic operator with the same characteristics of the sum in $\Reals$. In the following we will define the {\em sum} as an additive operator in a group. 
Let $\tau_0,\tau_1$ be two trajectories with the same time domain, such that:
$\tau_0:[0,t]\to(V_0\to\mathcal{D})$ and
$\tau_1:[0,t]\to(V_1\to\mathcal{D})$,
where $V_0,V_1$ are sets of variables. Let $\mathcal{D}$ be a domain of values for variables in $V_0,V_1$ (e.g. $\Reals$), such that its structure is a (commutative) group $\mathcal{G}$, with an operator $+_G$ and an identity element called $0_G$. Note that the subscript $G$ will be omitted when it is clear from the context.
\begin{definition}\label{def:trajsum}
The sum of $\tau_0,\tau_1$ is defined as:
\[
(\tau_0+\tau_1)(t)(v)=\left\{
\begin{array}{ll}
\tau_i(t)(v) & \mbox{if } v\in V_i\setminus V_{1-i}\\
\tau_0(t)(v)+\tau_1(t)(v) & \mbox{if } v\in V_0\cap V_1
\end{array}\right.
\]
\end{definition}
For the sake of simplicity in the following we will consider the operation of sum in $\Reals$. But all the results presented in this paper are still valid using any other operator with the same mathematical characteristics.

We report here some lemmas on trajectories that will be used in the following proof.
\begin{lemma}\label{lm:restrprojtraj}
Let $\tau$ be a trajectory in $V$. Let $I\subseteq\domain{\tau}$ and $V^\prime\subseteq V$. Then $(\tau\restr I)\restrrange V'=(\tau\restrrange V')\restr I$.
\end{lemma}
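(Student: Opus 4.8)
The statement to prove, Lemma~\ref{lm:restrprojtraj}, asserts that for a trajectory $\tau$ on $V$, an interval $I \subseteq \domain{\tau}$, and a subset $V' \subseteq V$, we have $(\tau\restr I)\restrrange V' = (\tau\restrrange V')\restr I$. This is a purely set-theoretic identity about the two operations $\restr$ (restriction of a function's domain) and $\restrrange$ (pointwise restriction of the codomain, when the codomain is itself a set of functions). The plan is to prove equality of functions by showing the two sides have the same domain and agree pointwise.

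First I would compute the domain of each side. By the definition of $\restr$, $\domain{\tau\restr I} = \domain{\tau}\cap I = I$ (using $I \subseteq \domain{\tau}$). The operation $\restrrange$ preserves the domain, so $\domain{(\tau\restr I)\restrrange V'} = I$. On the other side, $\restrrange$ preserves domain, so $\domain{\tau\restrrange V'} = \domain{\tau}$, and then restricting to $I$ gives $\domain{(\tau\restrrange V')\restr I} = \domain{\tau}\cap I = I$. So both sides have domain $I$.

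Next I would check the values agree: fix $c \in I$. The left side gives $\big((\tau\restr I)\restrrange V'\big)(c) = \big((\tau\restr I)(c)\big)\restr V' = \big(\tau(c)\big)\restr V'$, where the last step uses $(\tau\restr I)(c) = \tau(c)$ for $c\in I$ (definition of $\restr$ on functions). The right side gives $\big((\tau\restrrange V')\restr I\big)(c) = (\tau\restrrange V')(c) = \big(\tau(c)\big)\restr V'$, using that restricting the domain to $I$ does not change the value at $c\in I$, and then the definition of $\restrrange$. The two expressions coincide, so the functions are equal.

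This is essentially a definition-chasing argument, so there is no real obstacle; the only thing to be careful about is invoking the hypothesis $I\subseteq\domain{\tau}$ at the right moment (so that $\domain{\tau}\cap I = I$) and keeping straight which operation touches the domain and which touches the codomain. One should also note in passing that both sides are indeed trajectories on $V'$ over the interval $I$ --- this follows because $\tau\restr I$ and $\tau\restrrange V'$ are trajectories (prefix/suffix-type and projection operations return trajectories, as recalled in the notation paragraph), and the lemma's content is just that the order of the two operations is immaterial.
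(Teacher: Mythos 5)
Your proof is correct: the paper states Lemma~\ref{lm:restrprojtraj} without proof, treating it as a routine consequence of the definitions of $\restr$ and $\restrrange$, and your definition-chasing argument (both sides have domain $\domain{\tau}\cap I = I$ and agree pointwise, each evaluating to $\tau(c)\restr V'$) is exactly the intended justification. The one place the hypothesis $I\subseteq\domain{\tau}$ matters is in identifying the two domains, and you invoke it correctly.
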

\begin{lemma}\label{lm:restrsuffixtraj}
Let $\tau$ be a trajectory in $V$. Let $V^\prime\subseteq V$. Then $(\tau \unrhd t)\restrrange V'=(\tau\restrrange V') \unrhd t$.
\end{lemma}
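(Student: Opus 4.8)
The statement to prove is Lemma~\ref{lm:restrsuffixtraj}: for a trajectory $\tau$ in $V$ and $V' \subseteq V$, we have $(\tau \unrhd t)\restrrange V' = (\tau\restrrange V') \unrhd t$.

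The plan is to unfold both sides using the definitions given in the excerpt and check that they are literally the same function. Recall that $\tau \unrhd t \deq (\tau \restr [t,\infty)) - t$, where $-t$ denotes shifting the domain back by $t$; and that $f \restrrange P$ is the function $g$ with $\domain{g} = \domain{f}$ and $g(c) = f(c)\restr P$. So the left-hand side is obtained by first restricting $\tau$ to $[t,\infty)$, then shifting the domain by $-t$, then pointwise restricting each valuation to $V'$; the right-hand side first restricts each valuation of $\tau$ to $V'$, then restricts the domain to $[t,\infty)$, then shifts by $-t$.

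First I would use Lemma~\ref{lm:restrprojtraj} to commute the domain-restriction $\restr [t,\infty)$ past the pointwise projection $\restrrange V'$: that lemma gives $(\tau\restr [t,\infty))\restrrange V' = (\tau\restrrange V')\restr [t,\infty)$ (here the ``$I$'' of that lemma is $[t,\infty)\cap\domain{\tau}$, which is fine since restriction to a superset-intersection behaves correctly). Then it remains to observe that the domain-shift operation $-t$ commutes with the pointwise projection $\restrrange V'$: shifting the domain does not touch the values, and projecting the values does not touch the domain, so $((\tau\restr[t,\infty))\restrrange V') - t = ((\tau\restr[t,\infty)) - t)\restrrange V'$. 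Chaining these two commutations and rewriting with the definition of $\unrhd$ yields the claim.

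There is no real obstacle here; the only thing to be slightly careful about is bookkeeping of domains — making sure that after the shift the domain is left-closed with left endpoint $0$ so that the result is still a trajectory, and that Lemma~\ref{lm:restrprojtraj} is applied with the correct index set. Since all three operations (domain restriction, domain shift, pointwise value projection) act on ``orthogonal'' parts of the data (two on the domain, one on the codomain values), they all commute, and the proof is a short unfolding. I would write it as: ``By the definition of $\unrhd$, Lemma~\ref{lm:restrprojtraj}, and the fact that domain-shifting commutes with pointwise restriction of valuations, $(\tau\unrhd t)\restrrange V' = ((\tau\restr[t,\infty)) - t)\restrrange V' = ((\tau\restr[t,\infty))\restrrange V') - t = ((\tau\restrrange V')\restr[t,\infty)) - t = (\tau\restrrange V')\unrhd t.$''
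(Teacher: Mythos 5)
Your proof is correct. Note that the paper itself states Lemma~\ref{lm:restrsuffixtraj} without proof (it is one of several auxiliary facts on trajectories listed just before Proposition~\ref{th:autparallel}), so there is no official argument to compare against; your unfolding — rewrite $\unrhd t$ as domain restriction to $[t,\infty)$ followed by a shift, commute the domain restriction past $\restrrange V'$ via Lemma~\ref{lm:restrprojtraj} (with $I=[t,\infty)\cap\domain{\tau}$), and observe that the shift acts only on the domain while $\restrrange V'$ acts only on the valuations — is exactly the justification the authors presumably had in mind, and your attention to the index-set bookkeeping is the only point where care is needed.
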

\begin{lemma}\label{lm:restrconcattraj}
Let $\tau$ be a trajectory in $V$ such that $\tau=\tau_0\concat\tau_1\concat\tau_2\concat\ldots$. Let $V^\prime\subseteq V$. Then $(\tau_0\concat\tau_1\concat\tau_2\concat\ldots)\restrrange V'=(\tau_0\restrrange V')\concat(\tau_1\restrrange V')\concat(\tau_2\restrrange V')\concat\ldots$.
\end{lemma}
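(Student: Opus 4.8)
\noindent\emph{Proof strategy.} The plan is to peel the statement down to the two‑trajectory case
$(\tau_0\concat\tau_1)\restrrange V'=(\tau_0\restrrange V')\concat(\tau_1\restrrange V')$ (in the same spirit as Lemmas~\ref{lm:restrprojtraj} and \ref{lm:restrsuffixtraj}), then lift it to an arbitrary finite concatenation by induction on the number of trajectories using associativity of $\concat$, and finally obtain the countable version by observing that both sides are determined by their finite prefixes.

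For the binary case I would use the set‑theoretic description of $\concat$ recalled in the preliminaries: writing $\tau_1^{+}$ for the trajectory obtained from $\tau_1$ by shifting its domain so that it starts at $\ltime{\tau_0}$, the concatenation $\tau_0\concat\tau_1$ is the function on $\domain{\tau_0}\cup\domain{\tau_1^{+}}$ equal to $\tau_0$ on $\domain{\tau_0}$ and to $\tau_1^{+}$ elsewhere, so that at the single junction time $\ltime{\tau_0}$ the value $\lval{\tau_0}=\tau_0(\ltime{\tau_0})$ is the one retained. Now $\restrrange V'$ preserves the domain of a function and replaces each value $\val v$ by $\val v\restr V'$, and shifting a domain commutes with $\restrrange V'$; hence for every $t$ in the common domain, $((\tau_0\concat\tau_1)\restrrange V')(t)$ equals $\tau_0(t)\restr V'$ when $t\in\domain{\tau_0}$ and $\tau_1^{+}(t)\restr V'$ otherwise, which is exactly the value at $t$ of $(\tau_0\restrrange V')\concat(\tau_1^{+}\restrrange V')=(\tau_0\restrrange V')\concat(\tau_1\restrrange V')$. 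In particular $t=\ltime{\tau_0}$ causes no trouble: both sides install $\tau_0(\ltime{\tau_0})\restr V'=\lval{(\tau_0\restrrange V')}$ there. A trivial induction, invoking associativity of $\concat$ and the binary case at the inductive step, then yields $(\tau_0\concat\cdots\concat\tau_n)\restrrange V'=(\tau_0\restrrange V')\concat\cdots\concat(\tau_n\restrrange V')$ for all finite $n$.

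For the countable concatenation, recall that (cf.\ \cite{segala}) the domain of $\tau_0\concat\tau_1\concat\tau_2\cdots$ is the union of the domains of its finite partial concatenations, and the value of the infinite concatenation at any time $t$ of this domain coincides with the value at $t$ of every finite partial concatenation whose domain already contains $t$. Since $\restrrange V'$ preserves domains and, by the finite case, commutes with every partial concatenation, the two functions $(\tau_0\concat\tau_1\concat\cdots)\restrrange V'$ and $(\tau_0\restrrange V')\concat(\tau_1\restrrange V')\concat\cdots$ have the same domain and agree at each point of it, hence are equal. The hard part here is bookkeeping rather than mathematics: being scrupulous about the behaviour of $\concat$ at junction times — that the last valuation of the earlier trajectory is the one kept — and about the fact that the value of a countable concatenation at a given time is genuinely pinned down by a finite prefix, so that the reduction to the finite case is legitimate. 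Once those points are nailed down, restriction of valuations to $V'$ distributes through the definition of $\concat$ automatically.
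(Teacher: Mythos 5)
The paper states Lemma~\ref{lm:restrconcattraj} without proof (it is listed among the auxiliary trajectory lemmas and used directly in the proof of Proposition~\ref{th:autparallel}), so there is no official argument to compare against. Your proof is correct and is the natural one: since $\restrrange V'$ acts pointwise on valuations and preserves domains, it commutes with the union-plus-shift construction defining $\concat$, the only delicate point being the junction times, where both sides retain $\lval{\tau_i}\restr V'$; the induction to finite concatenations and the passage to countable ones via finite prefixes are handled correctly.
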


\begin{proposition}\label{th:autparallel}
The composition of two HIOAWs is a HIOAW.
\end{proposition}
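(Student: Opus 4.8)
The plan is to check, one at a time, the defining requirements of a HIOAW for the structure $\AutA=\AutA_1\|\AutA_2$ of Definition \ref{def:parallel}: that the variable sets $(U,X,Y)$ and the action sets $(I,H,O)$ are pairwise disjoint, that $Q\subseteq\vals{X}$, that $\Theta$ is a nonempty subset of $Q$, that $D\subseteq\vals{X}\times A\times\vals{X}$, that $\Tt$ is a set of trajectories on $V$ satisfying \textbf{T1}--\textbf{T3}, and finally that the world/automaton partition is respected, i.e.\ $\ws{U}\subseteq U$, $\ws{X}\subseteq X$, $\ws{Y}\subseteq Y$ and every world variable still carries a type that is a function of time and space. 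Everything except the clauses concerning $\Tt$ is essentially the classical argument for HIOAs: the pairwise disjointness of $U$, $X$, $Y$ (resp.\ $I$, $H$, $O$) follows from the compatibility conditions of Definition \ref{def:parcomp} together with the within-component disjointness of $\AutA_1$ and $\AutA_2$; the forms of $Q$, $\Theta$ and $D$ are immediate from their definitions; $\Theta\neq\emptyset$ follows from $\Theta_1,\Theta_2\neq\emptyset$ and $X_1\cap X_2=\emptyset$ (condition~3), so any $\val{x}_1\in\Theta_1$, $\val{x}_2\in\Theta_2$ glue to a valuation $\val{x}$ of $X$ with $\val{x}\restr X_i=\val{x}_i\in Q_i$; and the partition clauses hold by construction, the dynamic type of a shared output world variable being inherited from the components (for $v\in Y_{w1}\cap Y_{w2}$ one takes the evident sum-closed space of functions $(\mathcal{T}\times\M)\to\mathcal{D}$ as its dynamic type, so the sum of two component trajectories for $v$ is again of the right dynamic type).

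The real work is to show that $\Tt$ as defined in clause~7 of Definition \ref{def:parallel} is closed under prefix (\textbf{T1}), suffix (\textbf{T2}) and concatenation (\textbf{T3}). For \textbf{T1} and \textbf{T2} I would start from $\tau\in\Tt$ with witnesses $\tau_1\in\Tt_1$, $\tau_2\in\Tt_2$, and, given a prefix $\tau'\le\tau$ (resp.\ a suffix $\tau\unrhd t$), propose $\tau_i'=\tau_i\restr\domain{\tau'}$ (resp.\ $\tau_i\unrhd t$) as witnesses; these lie in $\Tt_i$ by \textbf{T1} (resp.\ \textbf{T2}) for $\AutA_i$. That the proposed witnesses do reconstruct $\tau'$ (resp.\ $\tau\unrhd t$) reduces, on the variables of $V_i\setminus(Y_{w1}\cap Y_{w2})$, to Lemma \ref{lm:restrprojtraj} (resp.\ Lemma \ref{lm:restrsuffixtraj}) --- restriction and suffixing commute with the projection $\restrrange$ --- and, on the shared world outputs $Y_{w1}\cap Y_{w2}$, to the observation that the sum of Definition \ref{def:trajsum} is pointwise and hence commutes with domain restriction and with shifting of the time domain. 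I would isolate this last observation as a short auxiliary lemma (\emph{sum commutes with $\restr$, $\unrhd$ and $\concat$}), proved directly from Definition \ref{def:trajsum}.

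The hard part will be \textbf{T3}, where one is given a sequence $\tau^0,\tau^1,\ldots\in\Tt$ with each nonfinal $\tau^j$ closed and $\lstate{\tau^j}=\fstate{\tau^{j+1}}$, and must show $\tau^0\concat\tau^1\concat\cdots\in\Tt$. The plan is: for each $j$ choose witnesses $\tau_1^j\in\Tt_1$, $\tau_2^j\in\Tt_2$, noting that $\restrrange$ preserves domains, so $\domain{\tau_1^j}=\domain{\tau_2^j}=\domain{\tau^j}$; observe that $X_i\cap(Y_{w1}\cap Y_{w2})=\emptyset$ (internal and output variables are disjoint), whence $X_i\subseteq V_i\setminus(Y_{w1}\cap Y_{w2})$ and therefore $\lstate{\tau^j}$ and $\lstate{\tau_i^j}$ (resp.\ $\fstate{}$) agree on $X_i$, so the chain condition $\lstate{\tau_i^j}=\fstate{\tau_i^{j+1}}$ holds and each nonfinal $\tau_i^j$ is closed; apply \textbf{T3} for $\AutA_i$ to get $\hat\tau_i=\tau_i^0\concat\tau_i^1\concat\cdots\in\Tt_i$; and finally check that $\hat\tau_1,\hat\tau_2$ witness $\tau^0\concat\tau^1\concat\cdots$. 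This last check splits, via Lemma \ref{lm:restrconcattraj}, into the non-shared variables (where it is immediate) and the shared world outputs, where it amounts to $(\tau_1^0+\tau_2^0)\concat(\tau_1^1+\tau_2^1)\concat\cdots=\hat\tau_1+\hat\tau_2$; since for each fixed $j$ the blocks $\tau_1^j$ and $\tau_2^j$ share the domain $\domain{\tau^j}$, the successive time shifts performed by $\concat$ coincide on the two sides, so the equality follows from the auxiliary \emph{sum-commutes-with-$\concat$} lemma (Lemma \ref{lm:padbuild} can be invoked beforehand, if preferred, to pre-align the building blocks so that all components of a given index literally have equal length). I expect the domain/shift bookkeeping in this final step --- making precise that the sum is defined at every point and that the two sides of the displayed equality are the same function --- to be the only genuinely delicate point; the remainder is routine.
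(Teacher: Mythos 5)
Your proposal is correct and follows essentially the same route as the paper's own proof: dispatch the non-trajectory clauses by compatibility plus component-wise disjointness, then verify \textbf{T1}--\textbf{T3} by restricting, suffixing, or concatenating the component witnesses and commuting $\restr$, $\unrhd$ and $\concat$ past $\restrrange$ (Lemmas \ref{lm:restrprojtraj}--\ref{lm:restrconcattraj}) and past the pointwise sum on $Y_{w1}\cap Y_{w2}$. Your explicit check in \textbf{T3} that the chain condition $\lstate{\tau_i^j}=\fstate{\tau_i^{j+1}}$ and closedness transfer to the component witnesses (because states are valuations of $X_i$, which is disjoint from the shared output world variables) is a detail the paper's proof leaves implicit, and is worth keeping.
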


\begin{proof}
We show that $\AutA_1\|\AutA_2$ satisfies the properties of a HIOAW. Disjointness of the $U,X,Y$ sets follows from disjointness of the same sets in $\AutA_1$ and $\AutA_2$ and compatibility. Similarly for the actions. Nonemptiness of starting state follows from nonemptiness of starting states of $\AutA_1$ and $\AutA_2$ and disjointness of $X_1$ and $X_2$.
We verify the {\bf T} properties of trajectories (see definition \ref{def:hioa}). Let $C_{12}$ be $Y_{w1}\cap Y_{w2}$.
 \begin{description}
 
  \item[T1]  We want to prove that for every $\tau\in\Tt$ and every $\tau' \leq \tau$, $\tau' \in \Tt$. Let $\tau$ be a trajectory in $\Tt$. Let $i\in\{1,2\}$. By the definition of parallel composition there exists $\tau_1\in \Tt_1,\tau_2\in\Tt_2$ such that $\tau\restrrange (V_i\setminus C_{12})=\tau_i\restrrange (V_i\setminus C_{12})$, and $\tau\restrrange  C_{12}=\tau_1\restrrange C_{12}+\tau_2\restrrange C_{12}.$
Let $\tau^\prime\leq\tau$. By definition of prefix we have that $\tau^\prime=\tau\restr I$ with $I=\domain{\tau^\prime}\subseteq\domain{\tau}$. Hence we can state that $\tau^\prime\restrrange(V_i\setminus C_{12})=(\tau\restr I)\restrrange (V_i\setminus C_{12})$.
By lemma \ref{lm:restrprojtraj}
$(\tau\restr I)\restrrange (V_i\setminus C_{12})=(\tau\restrrange (V_i\setminus C_{12}))\restr I$.
By definition of parallel composition and again by lemma \ref{lm:restrprojtraj}
$(\tau\restrrange (V_i\setminus C_{12}))\restr I=(\tau_i\restrrange (V_i\setminus C_{12}))\restr I=(\tau_i\restr I)\restrrange (V_i\setminus C_{12})$.
Let $\tau^\prime_1=\tau_1\restr I$ and $\tau^\prime_2=\tau_2\restr I$, then 
$(\tau_i\restr I)\restrrange (V_i\setminus C_{12})=\tau^\prime_i\restrrange (V_i\setminus C_{12})$.
Analogously, for the second statement of parallel composition of trajectories we have that
$\tau^\prime\restrrange C_{12}=(\tau\restr I)\restrrange C_{12}=(\tau\restrrange C_{12})\restr I=(\tau_1\restrrange C_{12})\restr I+(\tau_2\restrrange C_{12})\restr I=(\tau_1\restr I)\restrrange C_{12}+(\tau_2\restr I)\restrrange C_{12}=\tau^\prime_1\restrrange C_{12}+\tau^\prime_2\restrrange C_{12}$.
Hence $\tau'\in\Tt$.

  \item[T2] We want to prove that for every $\tau \in \Tt$ and every $t\in\domain{\tau}$, $\tau\unrhd t \in \Tt$. Let $\tau$ be a trajectory in $\Tt$. Let $i\in\{1,2\}$. By the definition of parallel composition there exists $\tau_1\in \Tt_1,\tau_2\in\Tt_2$ such that
  $\tau\restrrange (V_i\setminus C_{12})=\tau_i\restrrange (V_i\setminus C_{12}),$ and
$\tau\restrrange  C_{12}=\tau_1\restrrange C_{12}+\tau_2\restrrange C_{12}.$
   Hence, since $\domain{\tau_1}=\domain{\tau_2}=\domain{\tau}$, by lemma \ref{lm:restrsuffixtraj} we have that 
$ (\tau\unrhd t)\restrrange (V_i-C_{12})=(\tau\restrrange (V_i-C_{12}))\unrhd t=(\tau_i\restrrange (V_i-C_{12}))\unrhd t= (\tau_i\unrhd t)\restrrange (V_i-C_{12}) $.
  Moreover 
$ (\tau\unrhd t)\restrrange C_{12}=(\tau\restrrange C_{12})\unrhd t=(\tau_1\restrrange C_{12})\unrhd t+(\tau_2\restrrange C_{12})\unrhd t= (\tau_1\unrhd t)\restrrange C_{12}+(\tau_2\unrhd t)\restrrange C_{12}$.
  Recall that by the properties of trajectories $\tau\unrhd t$ is still a trajectory. Hence $\tau\unrhd t \in \Tt$.

  \item[T3] We want to prove that set $\Tt$ is closed under concatenation. Let $\tau_0,\tau_1,\tau_2,\ldots$ be a sequence of trajectories in $\Tt$, such that, for each nonfinal index $j$, $\tau_j$ is closed and  $\lstate{\tau_j}=\fstate{\tau_{j+1}}$. Let $\tau$ be $\tau_0\concat\tau_1\concat\tau_2\concat\ldots$. Let $i\in\{1,2\}$. 
 By definition of parallel composition for each $\tau_j$, $\exists \tau_{1j}, \tau_{2j}$ such that
  $\tau_j\restrrange (V_i\setminus C_{12})=\tau_{ij}\restrrange (V_i\setminus C_{12}),$ and
$\tau_j\restrrange  C_{12}=\tau_{1j}\restrrange C_{12}+\tau_{2j}\restrrange C_{12}.$
Let $\tau_i$ be $\tau_{i0}\concat\tau_{i1}\concat\tau_{i2}\concat\ldots$. Hence by lemma \ref{lm:restrconcattraj}
$\tau\restrrange(V_i\setminus C_{12})=(\tau_0\restrrange (V_i\setminus C_{12}))\concat(\tau_1\restrrange (V_i\setminus C_{12}))\concat(\tau_2\restrrange (V_i\setminus C_{12}))\concat\ldots=(\tau_{i0}\restrrange (V_i\setminus C_{12}))\concat(\tau_{i1}\restrrange (V_i\setminus C_{12}))\concat(\tau_{i2}\restrrange (V_i\setminus C_{12}))\concat\ldots=(\tau_{i0}\concat\tau_{i1}\concat\tau_{i2}\concat\ldots)\restrrange (V_i\setminus C_{12})=\tau_i\restrrange (V_i\setminus C_{12})$.
Moreover
$\tau\restrrange  C_{12}=(\tau_0\restrrange  C_{12})\concat(\tau_1\restrrange  C_{12})\concat(\tau_2\restrrange  C_{12})\concat\ldots
=(\tau_{10}\restrrange C_{12}+\tau_{20}\restrrange C_{12})\concat(\tau_{11}\restrrange C_{12}+\tau_{21}\restrrange C_{12})\concat(\tau_{12}\restrrange C_{12}+\tau_{22}\restrrange C_{12})\concat\ldots=
((\tau_{10}\restrrange C_{12})\concat(\tau_{11}\restrrange C_{12})\concat(\tau_{12}\restrrange C_{12})\concat\ldots)+((\tau_{20}\restrrange C_{12})\concat(\tau_{21}\restrrange C_{12})\concat(\tau_{22}\restrrange C_{12})\concat\ldots)=(\tau_{10}\concat\tau_{11}\concat\tau_{12}\concat\ldots)\restrrange C_{12} + (\tau_{20}\concat\tau_{21}\concat\tau_{22}\concat\ldots)\restrrange C_{12}=\tau_1\restrrange C_{12}+\tau_2\restrrange C_{12}$.
Hence $\tau \in \Tt$. 
 \end{description}
\end{proof}

\begin{example}
Consider again example \ref{ex:description}. Suppose to have a HIOAW representing a car as in fig. \ref{fig:car} in the sandy area, called $\AutB_1$. Another car represented by $\AutB_2$ (again of type represented in fig. \ref{fig:car}) enters the sandy area. We want to compose the two cars. Variables and actions of $\AutB_1$ are labelled by the subscript 1, the ones of $\AutB_2$ by the subscript 2. The obtained HIOAW $\AutB_1\|\AutB_2$ is represented in fig. \ref{fig:par_cars}. Notice that the effect of the output world variables are summed: the ground pressure $k$ of $\AutB_1\|\AutB_2$ represents a sort of a map of the values taken by the pressures given by the cars in the considered area, the same for the color $\xi$. Indeed here we did not make any constraints of two cars being at the same point at a time, because the model can take into account also collisions.
\end{example}

We now show that simulation relation and trace inclusion are preserved by composition. The main difficulty compared to the analogous results in \cite{segala} is that output world variables sum their effects. This means that it is not possible anymore to project executions of a composite system to obtain executions of the components. Rather we have to show that for each execution of the composite system there are executions of the components that can be pasted together.
\begin{figure}[htbp]
%\begin{scriptsize}
{\bf hioaw} $\AutB_1\|\AutB_2$\\
{\bf world variables}

\qquad	{\bf input}  $g$: Bool, $c$: Color;

\qquad	{\bf output} $k$: Real, $\xi$: Color;

{\bf automaton variables}

\quad	{\bf internal} $\phi_1$: Rad, $p_{T1}$: Real$^2$, $m_1$: Real, $vel_1$:Real, $r_1$:Real,
$\phi_2$: Rad, $p_{T2}$: Real$^2$, $m_2$: Real, $vel_2$:Real, $r_2$:Real, stop$_1$: Bool, stop$_2$: Bool, slow$_1$: Bool, slow$_2$: Bool;

{\bf actions}

\qquad  {\bf hidden} collision$_1$, collision$_2$, level$_1$, level$_2$;

{\bf transitions}

\qquad {\bf hidden} collision$_1$

\qquad {\bf pre} $ \exists p^*\in q(p_{T1},r_1) \mbox{ s.t. } c(t,p^*)=\mbox{black}$

\qquad {\bf eff} stop$_1=$ true;

\qquad {\bf hidden} collision$_2$

\qquad {\bf pre} $ \exists p^*\in q(p_{T2},r_2) \mbox{ s.t. } c(t,p^*)=\mbox{black}$

\qquad {\bf eff} stop$_2=$ true;

\qquad {\bf hidden} level$_1$

\qquad {\bf pre}  $\exists p^*\in q(p_{T1},r_1) \mbox{ s.t. } g(t,p^*)=$ true

\qquad{\bf eff} slow$_1=$ true;

\qquad {\bf hidden} level$_2$

\qquad {\bf pre}  $\exists p^*\in q(p_{T2},r_2) \mbox{ s.t. } g(t,p^*)=$ true

\qquad{\bf eff} slow$_2=$ true;

{\bf trajectories}

\qquad   $\xi(t,p)=\left\{\begin{array}{ll}
\mbox{black} & \mbox{if } p\in f(\phi_1,p_{T1})\lor p\in f(\phi_2,p_{T2}) \\
\mbox{white} & \mbox{otherwise}
\end{array}\right.$;

\qquad   $k(t,p)=h(m_1, f(\phi_1,p_{T1}))+h(m_2, f(\phi_2,p_{T2}))$;

\qquad $\ms{vel_1}(t)=\left\{\begin{array}{ll}
			0 & \mbox{if stop}_1\\
			0.5 & \mbox{if slow}_1 \\
			1 & \mbox{otherwise}.
			\end{array}\right.$
			
\qquad $\ms{vel_2}(t)=\left\{\begin{array}{ll}
			0 & \mbox{if stop}_2 \\
			0.5 & \mbox{if slow}_2 \\
			1 & \mbox{otherwise}.
			\end{array}\right.$

\caption{HIOAW representing parallel composition of $\AutB_1$ and $\AutB_2$.}
\label{fig:par_cars}
%\end{scriptsize}
\end{figure}

%We now introduce a projection lemma, that relates executions of the composed automaton to executions of the component automata.
\begin{lemma}\label{lm:compexec}
Let $\AutA=\AutA_1\parallel\AutA_2$ and let $\alpha$ be an execution fragment of $\AutA$. Then $\exists \alpha_1,\alpha_2$ execution fragments of $\AutA_1$ and $\AutA_2$ respectively, such that 
\begin{enumerate}
\item $\alpha\restr(A_i,V_i\setminus C_{12})=\alpha_i\restr(A_i,V_i\setminus C_{12}), i=1,2$, and 
\item $\alpha\restr(\emptyset, C_{12})=\alpha_1\restr(\emptyset, C_{12})+\alpha_2\restr(\emptyset, C_{12})$,
\end{enumerate} 
with $C_{12}=(Y_{w1}\cap Y_{w2})$.
\end{lemma}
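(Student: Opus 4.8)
The statement is the "decomposition" counterpart to the usual projection lemma for HIOA composition: given an execution fragment $\alpha$ of $\AutA_1\parallel\AutA_2$, I need to recover execution fragments $\alpha_1,\alpha_2$ of the components that agree with $\alpha$ off the shared output world variables $C_{12}$ and whose $C_{12}$-signals sum to that of $\alpha$. The plan is to work trajectory-by-trajectory along $\alpha=\tau_0 a_1 \tau_1 a_2 \tau_2\cdots$, using condition 7 of Definition~\ref{def:parallel} on each $\tau_j$ to get witnesses $\tau_{1j},\tau_{2j}\in\Tt_i$, using condition 6 (the discrete transition relation of the composition) on each action $a_k$ to split the discrete steps, and then reassembling the pieces into $\alpha_1$ and $\alpha_2$ — checking along the way that the gluing conditions $\lstate{\tau_{ij}}\arrow{a}\fstate{\tau_{i,j+1}}$ (or $\lstate{\tau_{ij}}=\fstate{\tau_{i,j+1}}$ when $a\notin A_i$) hold so that each $\alpha_i$ really is an execution fragment, and that $\fstate{\alpha_i}$ lies in $Q_i$ (and in $\Theta_i$ when $\alpha$ is an execution).

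\textbf{Key steps.} First I would fix notation: write $C_{12}=Y_{w1}\cap Y_{w2}$ and, for each trajectory $\tau_j$ of $\alpha$, invoke condition 7 of Definition~\ref{def:parallel} to obtain $\tau_{1j}\in\Tt_1$ and $\tau_{2j}\in\Tt_2$ with $\tau_j\restrrange(V_i\setminus C_{12})=\tau_{ij}\restrrange(V_i\setminus C_{12})$ and $\tau_j\restrrange C_{12}=\tau_{1j}\restrrange C_{12}+\tau_{2j}\restrrange C_{12}$; note $\domain{\tau_{1j}}=\domain{\tau_{2j}}=\domain{\tau_j}$. Second, for each discrete step $\lstate{\tau_{j-1}}\arrow{a_j}\fstate{\tau_j}$, unfold the definition of $D$ (condition 6): for each $i$, either $a_j\in A_i$ and $\lstate{\tau_{j-1}}\restr X_i\tran{a_j}\fstate{\tau_j}\restr X_i$, or $a_j\notin A_i$ and the two $X_i$-restrictions coincide. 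The subtlety here is that $\lstate{\tau_{j-1}}\restr X_i$ need not equal $\lstate{\tau_{i,j-1}}$ a priori, so I would observe that the $X_i$ variables are either automaton variables or internal world variables — in neither case are they in $C_{12}$ (which consists of \emph{output} world variables) — hence $\lstate{\tau_{j-1}}\restr X_i = \lstate{\tau_{j-1}}\restrrange(V_i\setminus C_{12})\restr X_i = \lstate{\tau_{i,j-1}}\restr X_i$, and similarly for $\fstate{\tau_j}$. This reconciles the trajectory witnesses with the discrete steps. Third, define $\alpha_i \deq \tau_{i0}\, a_1^i\, \tau_{i1}\, a_2^i\, \tau_{i2}\cdots$, where $a_k^i = a_k$ if $a_k\in A_i$ and otherwise the step is a stutter (one could phrase this via the padding machinery of Lemmas~\ref{lm:padtoexec}--\ref{lm:padbuild}, replacing $a_k$ by $\padact$ and then passing to the underlying execution, or simply concatenate $\tau_{i,k-1}$ with $\tau_{ik}$ directly since $\lstate{\tau_{i,k-1}}=\fstate{\tau_{ik}}$ in that case). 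Fourth, verify the two conclusions: conclusion~1 is immediate from $\tau_j\restrrange(V_i\setminus C_{12})=\tau_{ij}\restrrange(V_i\setminus C_{12})$ together with the matching of actions in $A_i$; conclusion~2 is the pointwise identity $\tau_j\restrrange C_{12}=\tau_{1j}\restrrange C_{12}+\tau_{2j}\restrrange C_{12}$ lifted along the sequence, using that restriction to the action set $\emptyset$ discards all the discrete steps and leaves only the juxtaposed trajectory pieces. Finally, if $\alpha$ is an execution rather than merely a fragment, check $\fstate{\alpha_i}=\fstate{\tau_{i0}}\in\Theta_i$ using condition 5 of Definition~\ref{def:parallel} and the same $X_i\cap C_{12}=\emptyset$ observation.

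\textbf{Main obstacle.} The one genuinely delicate point is reconciling the \emph{per-trajectory} witnesses $\tau_{ij}$ supplied by condition 7 with the \emph{discrete} transitions linking successive trajectories — i.e. showing the glued sequence $\alpha_i$ actually satisfies $\lstate{\tau_{ij}}\arrow{a_{j+1}}\fstate{\tau_{i,j+1}}$ (when $a_{j+1}\in A_i$) rather than merely each $\tau_{ij}$ being a valid trajectory in isolation. As sketched above, this hinges on the fact that $X_i$ never meets $C_{12}$, so states (which are valuations on $X_i$) are transported faithfully by the "agreement off $C_{12}$" clause; I expect this to be a short but essential lemma-style remark, and the rest to be bookkeeping over the sequence structure, entirely parallel to the T1--T3 verifications already carried out in Proposition~\ref{th:autparallel}. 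A secondary, purely cosmetic issue is handling infinite versus finite $\alpha$ uniformly, which the concatenation-closure axiom T3 and the sequence formulation already accommodate.
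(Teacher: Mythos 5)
Your proposal follows essentially the same route as the paper's proof: decompose each trajectory $\tau_j$ via condition 7 of Definition~\ref{def:parallel}, build padded executions $\padex_1,\padex_2$ by replacing actions outside $A_i$ with $\padact$, pass to the underlying executions $\alpha_i$ via Lemmas~\ref{lm:padtoexec} and~\ref{lm:padrestr}, and read off the two conclusions from the restriction and sum clauses. Your explicit check that the per-trajectory witnesses glue correctly across discrete steps (using $X_i\cap C_{12}=\emptyset$ so that condition 6 transports states faithfully) is a detail the paper's proof leaves implicit, and it is correct.
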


\begin{proof}
Let $\alpha=\tau_0 a_1\tau_1 a_2\tau_2 a_3\ldots \in\hfrag{\AutA}$. 
By definition of parallel composition, since each $\tau_j\in\Tt$ there exists $\tau_{j1}\in\Tt_1,\tau_{j2}\in\Tt_2$ such that: $\tau_j\restrrange (V_1\setminus  C_{12})=\tau_{j1}\restrrange (V_1\setminus  C_{12})$, $\tau_j\restrrange (V_2\setminus  C_{12})=\tau_{j2}\restrrange (V_2\setminus  C_{12})$ and $\tau_j\restrrange  C_{12}=\tau_{j1}\restrrange  C_{12}+\tau_{j2}\restrrange C_{12}.$
    Hence by definition of padding we can build two padded executions $\padex_1=\tau_{01} a'_1 \tau_{11} a'_2 \tau_{21} a'_3\ldots, \padex_2=\tau_{02} a''_1 \tau_{12} a''_2 \tau_{22} a''_3\ldots$ of $\AutA_1$ and $\AutA_2$ respectively, where
\[
a'_j=\left\{\begin{array}{ll}
a_j & \mbox{ if } a_j\in A_1\\
\padact & \mbox{ otherwise}
\end{array}\right.
\qquad
a''_j=\left\{\begin{array}{ll}
a_j & \mbox{ if } a_j\in A_2\\
\padact & \mbox{ otherwise}
\end{array}\right.
\]
%By lemma (projection of paddings) we have that 
%\[
%\alpha\restr(A_1,V_1\setminus C_{12})=\padex{\alpha'}\restr(A_1,V_1\setminus C_{12})
%\] 
Let $i\in\{1,2\}$. By construction of $\padex_i$ it is
$\alpha\restr(A_i,V_i\setminus C_{12})=\padex_i\restr(A_i,V_i\setminus C_{12})$.
By lemma \ref{lm:padtoexec} it is possible to define the execution $\alpha_i$ of $\AutA_i$ for which $\padex_i$ is a padding. Then by lemma \ref{lm:padrestr} it holds:
$\padex_i\restr(A_i,V_i\setminus C_{12})=\alpha_i\restr(A_i,V_i\setminus C_{12})$
which proves point 1 of this lemma. %Similarly for $\alpha_2$. 
Moreover, since the projection of an execution on an empty set of action gives a trajectory, we have that
$\alpha\restr(\emptyset, C_{12})=\padex_1\restr(\emptyset, C_{12})+\padex_2\restr(\emptyset, C_{12})=\alpha_1\restr(\emptyset, C_{12})+\alpha_2\restr(\emptyset, C_{12})$.
By definition \ref{def:trajsum} the last statement proves point 2 of this lemma.
\end{proof}

The following lemma from HIOAs applies directly to HIOAWs. The proof is reported in \cite{segala}. 
%says that we obtain the same result for an
%execution fragment $\alpha$ of a composition if we first extract the
%trace and then restrict to one of the components, or if we first
%restrict to the component and then take the trace.
\begin{lemma} \label{lemma-comm-proj}
  Let $\AutA=\AutA_1\| \AutA_2$, and let $\alpha$ be an execution
  fragment of $\AutA$.  Then, for $i=1,2$,
  $\trace{\alpha}\restr(E_i,Z_i) = \trace{\alpha\restr(A_i,V_i)}$.
\end{lemma}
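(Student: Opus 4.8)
The plan is to reduce the statement to pure bookkeeping about the restriction operation on $(A,V)$-sequences, observing that the one genuinely new feature of HIOAWs --- the summation of coinciding world output trajectories in item~7 of Definition~\ref{def:parallel} --- is irrelevant here, since the claim mentions only $\alpha$ and its projections and never a decomposition of $\alpha$ into executions of $\AutA_1$ and $\AutA_2$. In particular we do not even need $\alpha\restr(A_i,V_i)$ to be an execution fragment of $\AutA_i$; everything is about restriction of sequences as a syntactic operation.

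First I would unfold the definition of a trace: for an execution fragment $\beta$ of an automaton with external action set $E_\beta$ and external variable set $Z_\beta$ one has $\trace{\beta}=\beta\restr(E_\beta,Z_\beta)$. Writing $E=I\cup O$ and $Z=U\cup Y$ for the external sets of $\AutA$, this gives $\trace{\alpha}\restr(E_i,Z_i)=(\alpha\restr(E,Z))\restr(E_i,Z_i)$ and $\trace{\alpha\restr(A_i,V_i)}=(\alpha\restr(A_i,V_i))\restr(E_i,Z_i)$. Next I would read off from items~1--3 of Definition~\ref{def:parallel} that $E=E_1\cup E_2$ and $Z=Z_1\cup Z_2$, so that $E_i\subseteq E$ and $Z_i\subseteq Z$, together with the trivial inclusions $E_i\subseteq A_i$ and $Z_i\subseteq V_i$ coming from the definition of the external sets of a single automaton. (Using the compatibility conditions of Definition~\ref{def:parcomp} one can even sharpen these to $E\cap A_i=E_i$ and $Z\cap V_i=Z_i$, but the inclusions already suffice.)

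The only substantive ingredient is the composability law for restriction: $(\gamma\restr(A',V'))\restr(A'',V'')=\gamma\restr(A'\cap A'',V'\cap V'')$, valid for any $(A,V)$-sequence $\gamma$ and any $A',A''\subseteq A$, $V',V''\subseteq V$. This carries over to HIOAWs exactly as in \cite{segala}, because the restriction of a sequence is computed by projecting each constituent trajectory with $\restrrange$ (Lemmas~\ref{lm:restrprojtraj}--\ref{lm:restrconcattraj} being the trajectory-level identities behind it) and concatenating trajectories whose separating action has been deleted, and none of this interacts with the summation of world outputs. Granting the law, the left-hand side equals $\alpha\restr(E\cap E_i,Z\cap Z_i)=\alpha\restr(E_i,Z_i)$ and the right-hand side equals $\alpha\restr(A_i\cap E_i,V_i\cap Z_i)=\alpha\restr(E_i,Z_i)$, so the two coincide.

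I do not expect a real obstacle: the only point needing attention --- and the reason this lemma transfers from \cite{segala} essentially verbatim, in contrast with Lemma~\ref{lm:compexec}, where the decomposition into $\alpha_1,\alpha_2$ forces the world-output sum to be handled explicitly --- is to confirm that neither the definition of $\trace{\cdot}$ nor the composability of restriction is disturbed by the HIOAW-specific features (space-time-valued world variables, summed world outputs), which holds precisely because the statement never splits $\alpha$.
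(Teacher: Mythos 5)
Your proposal is correct and matches the paper's treatment: the paper gives no proof of Lemma~\ref{lemma-comm-proj} at all, stating only that the HIOA lemma ``applies directly to HIOAWs'' with the proof deferred to \cite{segala}, and your argument is exactly that standard proof (unfold $\trace{\cdot}$ as restriction to $(E,Z)$, apply the composability law $(\gamma\restr(A',V'))\restr(A'',V'')=\gamma\restr(A'\cap A'',V'\cap V'')$, and use $E\cap E_i=E_i$, $A_i\cap E_i=E_i$, $Z\cap Z_i=Z_i$, $V_i\cap Z_i=Z_i$). Your added observation --- that the world-output summation never enters because the identity involves only restrictions of the single sequence $\alpha$ and no decomposition into component executions --- is precisely the justification the paper leaves implicit for why the lemma transfers verbatim.
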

The following proposition relates the set of traces of a composite
automaton to the sets of traces of the component automata.
%It expresses the idea that a trace of a
%composition ``projects'' to obtain traces of the components and that traces of
%components can be ``pasted together'' to obtain a trace of the composition.
\begin{proposition}
  \label{lemma-htraces-proj}
Let $\AutA=\AutA_1\| \AutA_2$ and $\beta$ a trace of $\AutA$. Then $\exists \beta_1,\beta_2$ traces of $\AutA_1,\AutA_2$ respectively, such that 
\begin{enumerate}
\item $\beta\restr(E_i,Z_i\setminus C_{12})=\beta_i\restr(E_i,Z_i\setminus C_{12}), i=1,2$ and 
\item $\beta\restr(\emptyset, C_{12})=\beta_1\restr(\emptyset, C_{12})+\beta_2\restr(\emptyset, C_{12})$,
\end{enumerate} 
with $C_{12}=(Y_{w1}\cap Y_{w2})$.
\end{proposition}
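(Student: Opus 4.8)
The plan is to reduce this statement about traces to the already-established Lemma~\ref{lm:compexec} about executions, using the commutation-with-projection Lemma~\ref{lemma-comm-proj}. Since $\beta$ is a trace of $\AutA=\AutA_1\|\AutA_2$, by definition there exists an execution fragment $\alpha$ of $\AutA$ with $\beta=\trace{\alpha}=\alpha\restr(E,Z)$. Apply Lemma~\ref{lm:compexec} to $\alpha$ to obtain execution fragments $\alpha_1,\alpha_2$ of $\AutA_1,\AutA_2$ satisfying $\alpha\restr(A_i,V_i\setminus C_{12})=\alpha_i\restr(A_i,V_i\setminus C_{12})$ and $\alpha\restr(\emptyset,C_{12})=\alpha_1\restr(\emptyset,C_{12})+\alpha_2\restr(\emptyset,C_{12})$. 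Then set $\beta_i=\trace{\alpha_i}=\alpha_i\restr(E_i,Z_i)$, which are traces of $\AutA_i$ by definition.

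The main work is then to verify the two restriction identities for $\beta,\beta_1,\beta_2$ from the corresponding ones for $\alpha,\alpha_1,\alpha_2$. For point~1, I would start from $\beta\restr(E_i,Z_i\setminus C_{12})$ and observe that restricting a trace to a smaller action/variable set is itself an iterated restriction, so $\beta\restr(E_i,Z_i\setminus C_{12})=\alpha\restr(E,Z)\restr(E_i,Z_i\setminus C_{12})=\alpha\restr(E\cap E_i, (Z\cap Z_i)\setminus C_{12})$. Here I need the observation that for composition $E\cap E_i=E_i$ and $Z\cap Z_i=Z_i$ at the level of the external interface relevant to traces — more precisely, that the external actions/variables of $\AutA_i$ that survive in $\AutA$ are exactly $E_i$ and $Z_i$ up to the hiding of matched actions, and that $C_{12}\subseteq Z_i$. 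Then, since $E_i\subseteq A_i$ and $Z_i\setminus C_{12}\subseteq V_i\setminus C_{12}$, I can rewrite this as $\bigl(\alpha\restr(A_i,V_i\setminus C_{12})\bigr)\restr(E_i,Z_i\setminus C_{12})$, apply the first conclusion of Lemma~\ref{lm:compexec} to replace $\alpha\restr(A_i,V_i\setminus C_{12})$ by $\alpha_i\restr(A_i,V_i\setminus C_{12})$, and then un-nest the restriction back to $\alpha_i\restr(E_i,Z_i\setminus C_{12})=\beta_i\restr(E_i,Z_i\setminus C_{12})$. Lemma~\ref{lemma-comm-proj} is the clean statement underwriting that trace-then-project equals project-then-trace, and I would invoke it rather than re-derive the commutation.

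For point~2, the argument is parallel but uses the additive clause. Since projecting an execution onto the empty action set yields a single trajectory (as already noted in the proof of Lemma~\ref{lm:compexec}), $\beta\restr(\emptyset,C_{12})=\alpha\restr(\emptyset,C_{12})$ and $\beta_i\restr(\emptyset,C_{12})=\alpha_i\restr(\emptyset,C_{12})$ — here I use that $C_{12}\subseteq Y_{w1}\cap Y_{w2}\subseteq Z_i$, so the world-output variables in $C_{12}$ are external and the trace retains them. Then the second conclusion of Lemma~\ref{lm:compexec} gives $\alpha\restr(\emptyset,C_{12})=\alpha_1\restr(\emptyset,C_{12})+\alpha_2\restr(\emptyset,C_{12})$, and Definition~\ref{def:trajsum} lets us read this as the desired $\beta\restr(\emptyset,C_{12})=\beta_1\restr(\emptyset,C_{12})+\beta_2\restr(\emptyset,C_{12})$.

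I expect the only genuine obstacle to be bookkeeping about how the external interface sets $E,Z$ of the composite relate to $E_i,Z_i$: specifically confirming that $C_{12}$ consists of output world variables that remain output (hence external, hence in $Z$ and in each $Z_i$) after composition, and that matched actions hidden by composition do not interfere — i.e., that $\alpha\restr(E,Z)\restr(E_i,Z_i\setminus C_{12})$ really does collapse to $\alpha\restr(A_i,V_i\setminus C_{12})\restr(E_i,Z_i\setminus C_{12})$. Once that interface arithmetic is pinned down, everything else is a routine chain of restriction identities together with the two conclusions of Lemma~\ref{lm:compexec} and Definition~\ref{def:trajsum}.
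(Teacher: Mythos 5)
Your proposal is correct and follows essentially the same route as the paper's own proof: lift $\beta$ to an execution fragment $\alpha$, apply Lemma~\ref{lm:compexec} to obtain $\alpha_1,\alpha_2$, set $\beta_i=\trace{\alpha_i}$, and transfer the two identities through Lemma~\ref{lemma-comm-proj} together with the interface bookkeeping $E_i\subseteq A_i$, $Z_i\subseteq V_i$ (equivalently $A_i\cap E_i=E_i$, $V_i\cap Z_i=Z_i$) and the fact that $C_{12}$ consists of external output world variables. The only difference is presentational — you phrase the commutation as nested restrictions while the paper writes it as $\trace{\alpha\restr(A_i,V_i\setminus C_{12})}$ — which does not change the argument.
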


\begin{proof}
Let $\beta$ be a trace of $\AutA$. By definition of trace $\exists\alpha\in\hexec{\AutA}$ such that $\beta=\htrace{\alpha}$. 
By Lemma \ref{lm:compexec}, $\exists \alpha_1,\alpha_2$ execution fragments of $\AutA_1$ and $\AutA_2$ respectively, such that $\alpha\restr(A_i,V_i\setminus C_{12})=\alpha_i\restr(A_i,V_i\setminus C_{12}), i=1,2$, and $\alpha\restr(\emptyset, C_{12})=\alpha_1\restr(\emptyset, C_{12})+\alpha_2\restr(\emptyset, C_{12})$. 
Let $\beta_1=\htrace{\alpha_1}$ and $\beta_2=\htrace{\alpha_2}$. We want to prove that $\beta\restr(E_i,Z_i\setminus C_{12})=\beta_i\restr(E_i,Z_i\setminus C_{12}), i=1,2$ and $\beta\restr(\emptyset, C_{12})=\beta_1\restr(\emptyset, C_{12})+\beta_2\restr(\emptyset, C_{12})$.
By Lemma \ref{lemma-comm-proj}, $\beta\restr(E_i,Z_i) = \trace{\alpha\restr(A_i,V_i)}$. Moreover by the properties of projection of executions, we have that $\beta\restr(E_i,Z_i\setminus C_{12}) = \trace{\alpha\restr(A_i,V_i\setminus C_{12})}$ and $\beta\restr(\emptyset,C_{12}) = \trace{\alpha\restr(\emptyset,C_{12})}$. 
Furthermore by the properties of projection of executions we have that $(\alpha_i\restr(A_i,V_i\setminus C_{12}))\restr(E_i,Z_i\setminus C_{12})=\alpha_i\restr(A_i\cap E_i,(V_i\cap Z_i)\setminus C_{12})$. Since by definition $A_i\cap E_i=E_i$ and $V_i\cap Z_i=Z_i$, we obtain $\alpha_i\restr(A_i\cap E_i,(V_i\cap Z_i)\setminus C_{12})=\htrace{\alpha}\restr(E_i,Z_i\setminus C_{12})$. Similarly for projections on $C_{12}$. 
\end{proof}
%The proof of this theorem follows the one in \cite{segala} and it is not reported here for lack of space.
%
%Finally we present a theorem introducing substitutivity property:

The next two theorems prove the results on substitutivity for implementation and simulation relations.

\begin{theorem}
\label{th:sub_imp}
Let $\AutA_1$ and $\AutA_2$ be comparable HIOAWs with $\AutA_1 \leq \AutA_2$.
Let $\AutB$ be a HIOAW compatible with each of $\AutA_1$ and $\AutA_2$. Then $\AutA_1 \| \AutB$ and $\AutA_2 \| \AutB$ are comparable and $\AutA_1 \| \AutB \leq \AutA_2 \| \AutB$.
\end{theorem}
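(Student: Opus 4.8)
The plan is to follow the standard HIOA substitutivity argument, adapted for the summation of world outputs. First I would establish comparability of $\AutA_1 \| \AutB$ and $\AutA_2 \| \AutB$: since $\AutA_1$ and $\AutA_2$ are comparable, they have identical external interfaces (world and automaton input/output variable sets, and $E_1 = E_2$), so by the definition of parallel composition (Definition \ref{def:parallel}, items 1--3) the external interfaces of $\AutA_1 \| \AutB$ and $\AutA_2 \| \AutB$ coincide componentwise; in particular $C_{1B} := Y_{w1} \cap Y_{wB} = Y_{w2} \cap Y_{wB}$, which I will simply call $C$. This step is routine bookkeeping.

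The core of the proof is to show $\htraces{\AutA_1 \| \AutB} \subseteq \htraces{\AutA_2 \| \AutB}$. Let $\beta$ be a trace of $\AutA_1 \| \AutB$. By Proposition \ref{lemma-htraces-proj} applied to $\AutA_1 \| \AutB$, there exist traces $\beta_1$ of $\AutA_1$ and $\beta_B$ of $\AutB$ with $\beta\restr(E_1, Z_1\setminus C) = \beta_1\restr(E_1, Z_1\setminus C)$, $\beta\restr(E_B, Z_B\setminus C) = \beta_B\restr(E_B, Z_B\setminus C)$, and $\beta\restr(\emptyset, C) = \beta_1\restr(\emptyset, C) + \beta_B\restr(\emptyset, C)$. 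Since $\AutA_1 \leq \AutA_2$, we have $\beta_1 \in \htraces{\AutA_2}$, and because $\AutA_1$ and $\AutA_2$ are comparable, $\beta_1$ is a trace over exactly the interface of $\AutA_2$. Now I need the converse direction of Proposition \ref{lemma-htraces-proj}: given a trace $\beta_2 = \beta_1$ of $\AutA_2$ and the trace $\beta_B$ of $\AutB$ that agree on the shared interface (they share only world output variables in $C$, on which the composite is the sum, plus any genuinely shared actions and non-world shared variables, on which they must be identical by construction of $\beta$), one can paste them into a trace $\beta'$ of $\AutA_2 \| \AutB$ with $\beta'\restr(E_2, Z_2\setminus C) = \beta_2\restr(E_2, Z_2\setminus C)$, $\beta'\restr(E_B, Z_B\setminus C) = \beta_B\restr(E_B, Z_B\setminus C)$, and $\beta'\restr(\emptyset, C) = \beta_2\restr(\emptyset, C) + \beta_B\restr(\emptyset, C)$. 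Then $\beta'$ and $\beta$ restrict identically on $(E_1,Z_1\setminus C) = (E_2, Z_2\setminus C)$, on $(E_B, Z_B\setminus C)$, and on $(\emptyset, C)$ (both equal $\beta_1\restr(\emptyset,C) + \beta_B\restr(\emptyset,C)$), and since the external interface is covered by these pieces, $\beta' = \beta$; hence $\beta \in \htraces{\AutA_2 \| \AutB}$.

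The main obstacle is the pasting (converse) direction, because Proposition \ref{lemma-htraces-proj} as stated only gives decomposition, not recomposition. To get recomposition I would lift from traces to executions: take executions $\alpha_2 \in \hexec{\AutA_2}$ and $\alpha_B \in \hexec{\AutB}$ whose traces are $\beta_2$ and $\beta_B$, and whose internal behavior is left free. The real work is synchronizing them on common actions and shared non-world variables and then summing the world-output trajectories in $C$, exactly as Definition \ref{def:parallel}(6)--(7) requires. Here I would use Lemma \ref{lm:padbuild} to pad $\alpha_2$ and $\alpha_B$ so that corresponding trajectory segments have matching lengths, aligning their action sequences after insertion of $\padact$ steps; Lemmas \ref{lm:restrprojtraj}--\ref{lm:restrconcattraj} and Definition \ref{def:trajsum} then let me verify segment-by-segment that the summed, synchronized object is a trajectory of the composite (invoking Proposition \ref{th:autparallel} for the T-axioms) and that each discrete step lies in the composite transition relation $D$. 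Taking the trace of this composite execution and applying Lemma \ref{lm:padtrace} to discard the padding yields the desired $\beta' \in \htraces{\AutA_2 \| \AutB}$. The one subtlety to handle with care is that inputs of $\AutA_2$ supplied by $\AutB$ (and vice versa) must be consistent with the chosen $\beta_2, \beta_B$; but since these agree on the shared interface by hypothesis, and the only genuinely ``summed'' channel is $C$ while all other shared variables must be literally equal in any trace of the original composite $\beta$, consistency is inherited from the decomposition of $\beta$ we started with.
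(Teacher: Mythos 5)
Your proposal is correct and follows essentially the same route as the paper: decompose the composite behavior into component behaviors (the paper applies Lemma \ref{lm:compexec} directly at the execution level, where you first pass through the trace-level Proposition \ref{lemma-htraces-proj} and then lift back to executions anyway), replace the $\AutA_1$-component by an $\AutA_2$-execution with the same trace using $\AutA_1 \leq \AutA_2$, align all the pieces with paddings via Lemma \ref{lm:padbuild}, and recompose according to Definition \ref{def:parallel}, summing the trajectories on the shared world outputs. The consistency subtlety you flag at the end is the same point the paper addresses (rather tersely) by noting that the swapped executions can differ only in internal variables, which affect neither the shared external variables nor the resulting trace.
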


\begin{proof}
Let $\alpha$ be an execution of $\AutA_1\parallel\AutB$. By lemma \ref{lm:compexec}, two executions $\alpha_1, \alpha_B$ exist, such that $\alpha_1\in\hexec{\AutA_1}$, $\alpha_B\in\hexec{\AutB}$ and:
$\alpha\restr(A_1,V_1\setminus C_{1B})=\alpha_1\restr(A_1,V_1\setminus C_{1B})$,
$\alpha\restr(A_B,V_B\setminus C_{1B})=\alpha_B\restr(A_B,V_B\setminus C_{1B})$,
$\alpha\restr(\emptyset, C_{1B})=\alpha_1\restr(\emptyset, C_{1B})+\alpha_B\restr(\emptyset, C_{1B})$,
with $C_{1B}=Y_{w1}\cap Y_{wB}$.
By lemma \ref{lm:padbuild} we can take paddings of $\alpha, \alpha_1, \alpha_B$ such that the $j^{th}$ trajectory has the same length for all $j$. Let these paddings be $\padex, \padex_1, \padex_B$ respectively with $\padex=\tau_0 a_1 \tau_1 a_2 \tau_2 a_3 \ldots$, $\padex_1=\tau_{01} a'_{1} \tau_{11} a'_{2} \tau_{21} a'_{3} \ldots$ and $\padex_B=\tau_{0B} a''_{1} \tau_{1B} a''_{2} \tau_{2B} a''_{3} \ldots$.
Since $\AutA_1 \leq \AutA_2$ and by compatibility, we can find an execution $\alpha_2$ of $\AutA_2$ with the same trace of $\alpha_1$ and a padding of $\alpha_2$ following lemma \ref{lm:padbuild}. We write $\padex_2=\tau_{02} a'''_{1} \tau_{12} a'''_{2} \tau_{22} a'''_{3} \ldots$. 
By the definition of composition the execution of $\AutA_2\parallel\AutB$ obtained by $\padex_2$ and $\padex_B$ will be $\padex'=\tau'_0 b_1 \tau'_1 b_2 \tau'_2 b_3 \ldots$, where  
$\tau'_j\restrrange (V_2\setminus C_{2B})=\tau_{j2}\restrrange (V_2\setminus  C_{2B})$,
$\tau'_j\restrrange (V_B\setminus C_{2B})=\tau_{jB}\restrrange (V_B\setminus  C_{2B})$,
$\tau'_j\restrrange  C_{2B}=\tau_{j2}\restrrange  C_{2B}+\tau_{jB}\restrrange C_{2B}$, 
where $C_{2B}=Y_{w2}\cap Y_{wB}$. This is valid even if the trajectories in the padded executions have not the length of the original trajectories, by definition of prefix of a trajectory and prefix closure of trajectories in a HIOAW.
Actions $b_i$ might be different, but by construction, compatibility and lemma \ref{lm:padtrace} we have that $\padex'$ has the same trace of $\padex$ hence of $\alpha$. Indeed the (padded) executions can differ only in their internal variables (state), but they do not influence the traces (external variables). For this reason we can state that $\htraces{\AutA_1\parallel\AutB}\subseteq\htraces{\AutA_2\parallel\AutB}$, hence, by definition \ref{def:comp_imp} of implementation, $\AutA_1 \| \AutB \leq \AutA_2 \| \AutB$.
\end{proof}

\begin{corollary}\label{cor:sim2impl}
Let $\AutA_1$ and $\AutA_2$ be compatible HIOAWs, and let $\simrel$ be a simulation relation between $\AutA_1$ and $\AutA_2$. Let $\AutB$ be a HIOAW compatible with each of $\AutA_1$ and $\AutA_2$. Then $\AutA_1 \| \AutB \leq \AutA_2 \| \AutB$.
\end{corollary}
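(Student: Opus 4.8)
The plan is to derive this as an immediate consequence of Corollary~\ref{cor:ha-sim-ht} and Theorem~\ref{th:sub_imp}, with no new argument of its own. First I would observe that, by Definition~\ref{def:simulation}, the very existence of a simulation relation $\simrel$ from $\AutA_1$ to $\AutA_2$ presupposes that $\AutA_1$ and $\AutA_2$ are comparable; in particular they share the same external interface, so that the relation $\AutA_1 \leq \AutA_2$ is a meaningful statement in the sense of Definition~\ref{def:comp_imp}.

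Next I would invoke Corollary~\ref{cor:ha-sim-ht}: since $\simrel$ is a simulation from the comparable automata $\AutA_1$ to $\AutA_2$, we get $\htraces{\AutA_1} \subseteq \htraces{\AutA_2}$, which is precisely $\AutA_1 \leq \AutA_2$ by Definition~\ref{def:comp_imp}.

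Finally, with $\AutA_1 \leq \AutA_2$ established, $\AutA_1$ and $\AutA_2$ comparable, and $\AutB$ a HIOAW compatible with each of $\AutA_1$ and $\AutA_2$ (which is exactly the hypothesis of the corollary), Theorem~\ref{th:sub_imp} applies verbatim and yields that $\AutA_1 \| \AutB$ and $\AutA_2 \| \AutB$ are comparable and $\AutA_1 \| \AutB \leq \AutA_2 \| \AutB$, which is the claim.

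I do not expect any real obstacle: the corollary is just the composition of the trace-inclusion consequence of a simulation with the substitutivity result for implementation, both of which are already available. The only point deserving a word of care is the slight mismatch between the hypothesis ``compatible'' appearing in the statement of the corollary and the ``comparable'' needed to even speak of $\AutA_1 \leq \AutA_2$; I would resolve this by remarking that comparability of $\AutA_1$ and $\AutA_2$ is forced by the assumed existence of the simulation relation, so the two readings coincide for the purposes of the proof.
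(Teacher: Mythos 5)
Your proof is correct and follows exactly the same route as the paper's own: Corollary~\ref{cor:ha-sim-ht} gives $\htraces{\AutA_1}\subseteq\htraces{\AutA_2}$, hence $\AutA_1\leq\AutA_2$ by Definition~\ref{def:comp_imp}, and Theorem~\ref{th:sub_imp} concludes. Your closing remark about the mismatch between the ``compatible'' in the corollary's hypothesis and the ``comparable'' needed to state $\AutA_1\leq\AutA_2$ is a sensible clarification of a point the paper's own proof silently glosses over.
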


\begin{proof}
Since $\AutA_1 \simrel \AutA_2$, by corollary \ref{cor:ha-sim-ht}, $\htraces{\AutA_1}\subseteq\htraces{\AutA_2}$. By definition \ref{def:comp_imp} of implementation, $\AutA_1 \leq \AutA_2$. By theorem \ref{th:sub_imp} this implies that for any $\AutB$, $\AutA_1 \| \AutB \leq \AutA_2 \| \AutB$.
\end{proof}

\begin{theorem}
\label{th:sub_sim}
Let $\AutA_1$ and $\AutA_2$ be compatible HIOAWs, and let $\simrel$ be a simulation relation between $\AutA_1$ and $\AutA_2$.
Let $\AutB$ be a HIOAW compatible with each of $\AutA_1$ and $\AutA_2$. Then $\exists \simrel'$ such that $(\AutA_1 \| \AutB) \simrel' (\AutA_2 \| \AutB)$.
\end{theorem}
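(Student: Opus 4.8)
The plan is to construct the simulation relation $\simrel'$ explicitly from the given $\simrel$, in the standard way for parallel composition: a state $\val{x}$ of $\AutA_1\|\AutB$ is related by $\simrel'$ to a state $\val{y}$ of $\AutA_2\|\AutB$ exactly when $\val{x}\restr X_1 \simrel \val{y}\restr X_2$ and $\val{x}\restr X_B = \val{y}\restr X_B$. Since $\AutA_1$ and $\AutA_2$ are compatible with $\AutB$ and (being related by a simulation) comparable, the states of $\AutB$ inside the two composites coincide, so this is well-defined; and comparability of $\AutA_1\|\AutB$ with $\AutA_2\|\AutB$ follows from Theorem \ref{th:sub_imp} (or directly from Definition \ref{def:parallel}, since the external interfaces are built identically from those of $\AutA_1$, $\AutA_2$, $\AutB$).

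Then I would verify the three clauses of Definition \ref{def:simulation}. The start-state condition is immediate: an initial state of $\AutA_1\|\AutB$ projects to initial states of $\AutA_1$ and of $\AutB$; apply clause 1 of $\simrel$ to the $\AutA_1$-part, keep the $\AutB$-part fixed, and use Definition \ref{def:parallel}(5) to assemble an initial state of $\AutA_2\|\AutB$ related by $\simrel'$. For the step conditions (clauses 2 and 3), take an execution fragment $\alpha$ of $\AutA_1\|\AutB$ that is a single action between point trajectories, or a single closed trajectory, starting at $\val{x}$ with $\val{x}\simrel'\val{y}$. Use Lemma \ref{lm:compexec} to split $\alpha$ into matching fragments $\alpha_1$ of $\AutA_1$ and $\alpha_B$ of $\AutB$ (the projections agree off $C_{1B}=Y_{w1}\cap Y_{wB}$ and sum on $C_{1B}$). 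Feed $\alpha_1$ through the simulation $\simrel$ to get a closed fragment $\alpha_2$ of $\AutA_2$ with $\fstate{\alpha_2}=\val{y}\restr X_2$, $\trace{\alpha_2}=\trace{\alpha_1}$, and $\lstate{\alpha_1}\simrel\lstate{\alpha_2}$; keep $\alpha_B$ as the $\AutB$-fragment. Then paste $\alpha_2$ and $\alpha_B$ back together via Definition \ref{def:parallel}(7) — summing the shared output world trajectories on $C_{2B}=Y_{w2}\cap Y_{wB}$ — to obtain a fragment $\beta$ of $\AutA_2\|\AutB$, and check $\fstate{\beta}=\val{y}$, $\lstate{\alpha}\simrel'\lstate{\beta}$, and (using Lemma \ref{lemma-comm-proj} componentwise) $\trace{\beta}=\trace{\alpha}$.

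The main obstacle is the pasting step: unlike the classical HIOA case, one cannot simply take a product of the component fragments because the output world variables in $C_{1B}$ (resp. $C_{2B}$) are summed rather than shared, so the fragments $\alpha_1$ (or $\alpha_2$) and $\alpha_B$ must be made \emph{alignable} — corresponding trajectories of the same duration, discrete actions interleaved consistently — before Definition \ref{def:parallel}(7) applies. This is exactly what the padding machinery of Section \ref{sec:semantics} is for: by Lemma \ref{lm:padbuild} I can pad $\alpha_1$ and $\alpha_B$ so that their $j$-th trajectories have equal length, and likewise pad $\alpha_2$ (which exists with the right trace by the simulation, after re-running the argument of Theorem \ref{th:sub_imp}) to match, using Lemma \ref{lm:padtrace} to see that padding does not disturb traces and Lemma \ref{lm:padtoexec} to return from padded executions to genuine ones. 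A secondary subtlety is that $\alpha_2$ produced by the simulation need not have the \emph{same} trajectory decomposition as $\alpha_1$ (the simulation only promises trace equality), so the alignment has to be argued at the level of traces and padded executions rather than trajectory-by-trajectory; Lemmas \ref{lm:restrprojtraj}--\ref{lm:restrconcattraj} handle the bookkeeping that restriction, suffix, and concatenation commute with the projection operator $\restrrange$, which is what makes the summed-trajectory clause survive all these manipulations.

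Finally, I would note that clause 3 (closed trajectory) and clause 2 (single action) are handled uniformly: in the trajectory case $\alpha_1$ is a single closed trajectory and $\simrel$ returns a closed fragment $\alpha_2$ of $\AutA_2$, which we pad against the corresponding $\AutB$-trajectory; in the action case $\alpha$ contains one discrete step, which by Lemma \ref{lm:compexec} lands in $A_1$, in $A_B$, or in both, and in each sub-case the simulation supplies the matching $\AutA_2$-behavior while $\AutB$ either takes the same step or stays put (Definition \ref{def:parallel}(6)). Once $\simrel'$ is shown to satisfy all three clauses, the statement $(\AutA_1\|\AutB)\simrel'(\AutA_2\|\AutB)$ follows, and as a bonus Corollary \ref{cor:ha-sim-ht} re-derives the trace inclusion of Corollary \ref{cor:sim2impl} from it.
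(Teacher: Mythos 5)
Your relation $\simrel'$ is exactly the one the paper uses ($(x_1,x_B)\simrel'(x_2,x'_B)$ iff $x_1\simrel x_2$ and $x_B=x'_B$), and the start-state clause is handled identically, so the setup matches. Where you genuinely diverge is in the two step clauses. The paper does not decompose-and-paste there: it invokes Corollary \ref{cor:sim2impl} (which rests on Theorem \ref{th:sub_imp}) to obtain $\AutA_1\|\AutB\leq\AutA_2\|\AutB$, and from that trace inclusion it asserts the existence of a fragment $\beta$ of $\AutA_2\|\AutB$ with the same trace as $\alpha$ and first state $(x_2,x_B)$, finishing by noting the last states are related. You instead re-run the machinery of Theorem \ref{th:sub_imp} locally at the level of a single step: split $\alpha$ via Lemma \ref{lm:compexec}, push the $\AutA_1$-part through $\simrel$, and reassemble with the $\AutB$-part via the padding lemmas and the summed-trajectory clause of Definition \ref{def:parallel}. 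Your route is more work but buys something real: trace inclusion of the composites is a statement about executions from \emph{initial} states, so it does not by itself yield a matching fragment starting from an arbitrary related state $(x_2,x_B)$, whereas your step-local construction produces exactly that, together with the related last states needed for clauses 2 and 3. The one point to be careful about --- which you do flag --- is that the fragment $\alpha_2$ returned by $\simrel$ need not share the trajectory decomposition of $\alpha_1$, so the pasting with $\alpha_B$ must go through Lemma \ref{lm:padbuild} rather than a trajectory-by-trajectory product; note also that the summed parts agree because comparability gives $Y_{w1}=Y_{w2}$, hence $C_{1B}=C_{2B}$, and $\trace{\alpha_2}=\trace{\alpha_1}$ forces equality of the summands there. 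With those details in place your argument is sound.
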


\begin{proof}
Let $\simrel'$ be a relation between $\AutA_1 \| \AutB$ and $\AutA_2 \| \AutB$ such that for each $x_1\in  Q_1$, $x_2\in Q_2$, $x_B,x'_B \in Q_B$,
$(x_1,x_B)\simrel'(x_2,x'_B) \mbox{ iff } (x_1 \simrel x_2) \land (x_B=x'_B)$.
We prove that $\simrel'$ is a simulation relation by proving that $\simrel'$ satisfies each point of definition \ref{def:simulation}.
\begin{enumerate}
\item Since for each $x_1\in  \Theta_1$, $x_2\in \Theta_2$, $x_1 \simrel x_2$, then by definition of $\simrel'$, for each initial state $(x_1,x_B)$ of $\AutA_1\|\AutB$ and each initial state $(x_2,x_B)$ of $\AutA_2\|\AutB$, $(x_1,x_B)\simrel'(x_2,x_B)$, with $x_B \in \Theta_B$.
\item Let $\alpha$ be an execution fragment of $\AutA_1\|\AutB$ consisting of one action surrounded by two point trajectories, with $\fstate{\alpha}=(x_1,x_B)$. Let $\lstate{\alpha}=(x'_1,x'_B)$. Since $x_1\simrel x_2$ then $\exists x'_2\in Q_2$ such that $x'_1\simrel x'_2$. Since $\AutA_1\simrel\AutA_2$, by corollary \ref{cor:sim2impl}, $\AutA_1 \| \AutB \leq \AutA_2 \| \AutB$. Then there exists $\beta$ execution fragment of $\AutA_2\|\AutB$ with the same trace of $\alpha$ and $\fstate{\beta}=(x_2,x_B)$. By definition of parallel composition there exists an action bringing the state to $(x'_2,x'_B)$. Hence by definition of $\simrel'$ we have that $(x'_1,x'_B)\simrel'(x'_2,x'_B)$.
\item Let $\alpha$ be and execution fragment of $\AutA_1\|\AutB$ such that $\alpha=\tau\in\Tt$ closed and with $\fstate{\alpha}=(x_1,x_B)$.  Let $\beta$ be an execution fragment of $\AutA_2\|\AutB$ such that $\fstate{\beta}=(x_2,x_B)$. Let $\lstate{\alpha}=(x'_1,x'_B)$ and $\lstate{\beta}=(x'_2,x'_B)$. Since $x_1\simrel x_2$, by definition of $\simrel'$ it is $(x_1,x_B)\simrel'(x_2,x_B)$. By corollary \ref{cor:sim2impl}, there exists an execution fragment $\beta$ of $\AutA_2\|\AutB$ with the same trace of $\alpha$.
\end{enumerate}
\end{proof}

\section{Conclusions}

In this paper we have proposed an extension of the Hybrid I/O Automaton model of \cite{segala} to provide a natural representation of the fact that objects move in a world that they can observe and modify. We started from the analysis of the case studies of the C4C project, representing agents that move in a dynamical environment and have to achieve a goal by coordination.
Besides the classical signals that automata send to each other either via discrete communication events or shared continuous variables, we specialized some variables of HIOAs to let them communicate implicitly by affecting their surrounding world and observing the effects on the worlds of the activity of other automata. This mechanism for interaction turns out to be adequate for compositional analysis, which is one of the main features of HIOAs that we wanted to keep in an extended model. Indeed we introduced the notion of parallel composition, and proved compositionality results. 
The natural extension of this formalism to model environment has been reported in \cite{tec_rep}, leading to a hierarchical representation of automata and introducing the ability of composing them {\em vertically} into nested worlds. We presented in this paper a toy example to show the application of the theory, but a more complex and reality-based application can be found in \cite{adhs12}. The simulation tools are under study.
Future research directions include the ability to describe scenarios where automata are created and destroyed and where communication links change dynamically.

\bibliographystyle{eptcs}
\bibliography{gandalf12}

\end{document}